\newtheorem{theorem}{Theorem} [section]
\newtheorem{proposition}[theorem]{Proposition}
\newtheorem{corollary}[theorem]{Corollary}
\newtheorem{lemma}[theorem]{Lemma}
\newtheorem{definition}{Definition}[section]
\newtheorem{assumption}{Assumption}[section]
\newtheorem{remark}{Remark}[section]
\numberwithin{equation}{section} 
\renewcommand{\geq}{\geqslant}
\renewcommand{\leq}{\leqslant}
\newcommand{\citethm}[1]{Theorem \ref{#1}}
\newcommand{\citeprop}[1]{Proposition \ref{#1}}
\newcommand{\citecoro}[1]{Corollary \ref{#1}}
\newcommand{\citelem}[1]{Lemma \ref{#1}}
\newcommand{\citeassmp}[1]{Assumption \ref{#1}}
\newcommand{\opfont}{\mathbb}
\newcommand{\BE}[2][]{\ensuremath{\operatorname{\opfont{E}}^{#1}\!\left[#2\right]}}
\newcommand{\bp}{\ensuremath{\opfont{P}}}
\newcommand{\BP}[2][]{\ensuremath{\operatorname{\opfont{P}}^{#1}\!\left(#2\right)}}
\newcommand{\BF}{\ensuremath{\mathcal{F}}}
\newcommand{\R}{\ensuremath{\operatorname{\mathbb{R}}}}
\newcommand{\dd}{\ensuremath{\operatorname{d}\! }}
\newcommand{\dt}{\ensuremath{\operatorname{d}\! t}}
\newcommand{\ds}{\ensuremath{\operatorname{d}\! s}}
\newcommand{\setq}{\mathscr{Q}}
\newcommand{\idd}[1]{\ensuremath{\operatorname{\mathds{1}}_{#1}}}
\newcommand{\nn}{\nonumber}
\newcommand{\einf}{\ensuremath{\mathrm{ess\:inf\:}}}
\newcommand{\esup}{\ensuremath{\mathrm{ess\:sup\:}}}
\newcommand{\barq}{\overline{Q}}
\newcommand{\ep}{\varepsilon}
\newcommand{\opl}{\Phi} 
\newcommand{\opll}{\Psi} 
\newcommand{\dsp}{\displaystyle}
\newcommand{\setrho}{\mathscr{U}_{\rho}}
\newcommand{\seta}{\mathscr{A}}
\def\game{Q_{\pr}}
\def\pr{\vartheta}
\def\tr{\intercal}
\newcommand{\lam}{\lambda}
\newcommand{\llam}{\underline{\lambda}}
\newcommand{\veum}{V_{\mathrm{\tiny{EU}}}}
\newcommand{\setcb}{C_{\mathrm{\tiny{RCI}}}^{1-}((0,1))}
\begin{document}
\title{Robust utility maximization with intractable claims}
\author{Yunhong Li\thanks{Department of Applied Mathematics, The Hong Kong Polytechnic University, Kowloon, Hong Kong, China. Email: \url{yunhong.li@connect.polyu.hk}.}
\and Zuo Quan Xu\thanks{Department of Applied Mathematics, The Hong Kong Polytechnic University, Kowloon, Hong Kong, China. Email: \url{maxu@polyu.edu.hk}. }
\and Xun Yu Zhou\thanks{Department of Industrial Engineering and Operations Research \& The Data Science Institute, Columbia University, New York, NY 10027, USA. Email: \url{xz2574@columbia.edu}.}
}
\date{\today}
\maketitle

\begin{abstract}
We study a continuous-time expected utility maximization problem in which the investor at maturity receives the value of a contingent claim in addition to the investment payoff from the financial market. The investor knows nothing about the claim other than its probability distribution, hence an ``intractable claim''.
In view of the lack of necessary information about the claim, we consider a robust formulation to maximize her utility in the worst scenario. We apply the quantile formulation to solve the problem, expressing the quantile function of the optimal terminal investment income as the solution of certain variational inequalities of ordinary differential equations and obtaining the resulting optimal trading strategy. In the case of an exponential utility, the problem reduces to a (non-robust) rank--dependent utility maximization with probability distortion whose solution is available in the literature. The results can also be used to determine the utility indifference price of the intractable claim.

\textbf{Keywords:} Intractable claim $\cdot$ robust model $\cdot$ quantile formulation $\cdot$ calculus of variations $\cdot$ variational inequalities $\cdot$ rank-dependent utility. 

\textbf{MSC:} 91B28 $\cdot$ 91G10 $\cdot$ 35Q91
\end{abstract}

\section{Introduction}
The expected utility models have taken a central position in modern financial portfolio selection theory.
In a typical expected utility model, an investor looks for the best portfolio in a given financial market to maximize her utility, based on a dynamically expanding body of information and subject to various constraints. There are also the so-called partial information models in which not all the information are available to the investor and she needs to continuously observe some related information processes to gain better understanding of the underlying market over time, using techniques such as filtering theory and adaptive controls.

In practice, however, there are situations where people, facing future contingent claims (incomes or liabilities) that cannot be understood even if the time is very close to maturity, aim to maximize the utility of a total sum including payoffs from the stock investment as well as those ``external'' claims. For instance, the claim of an insurance contract is not revealed until the related event occurs, the down payment of a mortgage is only known when the purchase has been made, and the value of the employer's stock option may have little to do with the general stock market.
These expected utility models are essentially different from the typical partial information models mentioned above because, while the probability distributions of these claims can be estimated, little or no information on their {\it correlations} with the financial market are available. Such a claim, about which we know nothing but its own distribution, is termed an ``intractable claim'' in Hou and Xu \cite{HX16}.

Because intractable claims are not predictable nor hedgeable until they are realized, classical stochastic control including filtering theory is not applicable.
Hou and Xu \cite{HX16} formulate and investigate a continuous-time Markowitz's mean--variance model with an intractable claim and the objective to minimize the robust variance in the worst scenario.
The special structure of the mean--variance setting allows them to use the completion-of-squares and quantile optimization techniques to derive the optimal solution.

In this paper we consider a continuous-time expected utility model (i.e. Merton's problem) with an intractable claim.
The usual maximization of expected utility is not even well defined because one cannot compute the expected utility of the sum of the investment payoff form the market and the intractable claim without knowing the joint distribution of the two. Instead,
we introduce a robust model whose objective is to maximize in the worst scenario, namely, to maximize the smallest possible expected utility of the total terminal income over all possible realizations of the intractable claim.
While this formulation is inspired by \cite{HX16}, our model loses the mean--variance structure so the approach there largely fails to work here.

At a first glance, the robust model still involves the expected utility of two random payoffs whose correlation is unknown so the aforementioned difficulty seems to be intact. However, we prove that at the {\it minimum} expected utility the two random incomes can be separated through the sum of the (marginal) quantile functions of the two; hence the robust measure is computable based on the {\it respective} information about the market and the claim.

The idea of taking the quantile functions, instead of random variables, as the decision variable to study preference functionals goes back to at least Schied \cite{S04}. Jin and Zhou \cite{JZ08} employ the quantile method to overcome the difficulty arising from probability distortion in a continuous-time behavioral portfolio selection model featuring cumulative prospect theory. He and Zhou \cite{HZ11} put forward a unified theory on this approach that can be used to solve both expected and non-expected utility models and coin the term ``quantile formulation''. The theory has been further developed since then; see e.g. \cite{XZ13,X14,HJZ15,X16,HX16,XZ16,X21,X23}.

In this paper, we apply the quantile formulation to tackle our problem. From the ``first-order condition'' of the corresponding calculus of variations for quantile, we derive an ordinary integro--differential equation (OIDE) as well as a variational inequality type of ordinary differential equations (ODEs) that are satisfied by the quantile function of the optimal terminal payoff from the stock market.
Surprisingly, if the utility is exponential, we show that our robust model can be reduced to a behavioral rank--dependent utility maximization problem whose solution is available (\cite{XZ16}).

The rest of the paper is organized as follows. In Section \ref{secpf}, we introduce the financial market and formulate the robust expected utility maximization (EUM) problem. We turn the problem into its quantile formulation and study its well-posedness in Section \ref{secqf}. Section \ref{secqfsol} presents the solution. In Section \ref{expsp} we discuss the special case of the exponential utility. Finally, Section \ref{conclude} concludes, in which we also discuss how to apply our results to determine the utility indifference price of an intractable claim.

\section{Problem Formulation}\label{secpf}

Throughout this paper, we fix a probability space $(\Omega,\mathbf{F},\bp)$ satisfying the usual assumptions, along with a standard $m$-dimensional Brownian motion $$W=\{(W_1(t), \cdots, W_m(t))^{\tr},\;t\geq0\}$$ representing the uncertainties/risks of the financial market under consideration (to be described below), where (and hereafter) ${A}^\tr$ denotes the transpose of a matrix or vector $A$. Unless otherwise stated,
a r.v. is the shorthand for an $\mathbf{F}$-measurable random variable. We also fix an investment horizon $[0,T]$ where $T>0$ is a constant. Let $\{\BF_t\}_{t\geq0}$ be the filtration generated by $W$ complemented by all the $\bp$-null sets,
and $L^{0}_{\BF_{T}}$ be the set of $\BF_{T}$-measurable random variables.
We stress that $\BF_{T}\subsetneqq\mathbf{F}$; so there is randomness {\it outside} of the financial market or, equivalently, not every random variable belongs to $L^{0}_{\BF_{T}}$.

For any r.v. $Y$, let $F_{Y}$ denote its probability distribution function,
and $Q_{Y}$ its quantile (function) defined by
\[Q_{Y}(t):=\inf\big\{z\in\R\;\big|\; F_{Y}(z)> t\big\}, \quad t\in (0,1), \]
which is the right-continuous inverse function of $F_{Y}$.
It is easy to check that every quantile is a right-continuous and increasing\footnote{In this paper by ``increasing'' we mean ``non-decreasing'', and by ``decreasing'' we mean ``non-increasing''.} (called RCI from now on) function $:(0,1)\to \R$.
On the other hand, every RCI function $Q: (0,1)\to \R$ is the quantile of the r.v. $Q(U)$, where $U$ is any r.v. that is uniformly distributed on $(0,1)$. Hence, the set of quantiles is the same as that of RCI functions $: (0,1)\to \R$.
We may also use the convention that
\[Q_{Y}(0):=\lim_{t\to 0+}Q_{Y}(t) \quad \mbox{and}\quad Q_{Y}(1):=\lim_{t\to 1-}Q_{Y}(t); \]
so
\[Q_{Y}(0)=\einf Y\quad \mbox{and}\quad Q_{Y}(1)=\esup Y.\]
We write $X\sim Y$ if two r.v.'s $X$ and $Y$ follow the same distribution.

\par
Let $\setcb$ be the set of functions $f: (0,1)\to \R$ that are absolutely continuous with derivatives $f'$ being RCI.
Clearly, any function $f\in\setcb$ is convex; hence the limits
$f(0+)=\lim_{t\to 0+}f(t)$ and $f(1-)=\lim_{t\to 1-}f(t)$ exist (but may be infinity).
\par
In what follows, ``almost everywhere'' (a.e.) and ``almost surely'' (a.s.) may be suppressed for expositional simplicity whenever no confusion might occur. Finally,
we use $\vert M\vert$ to denote $\mathrm{trace}(MM^{\tr})$ for any matrix or vector $M$.

Now we are ready to introduce the financial market and our investment problem.

\subsection{Financial Market Model}
\noindent
Consider a continuous-time arbitrage-free financial market where $m+1$ assets are traded continuously on $[0, T]$.
One of the assets is a bond, whose price $S_0(\cdot)$ evolves according to an ordinary differential equation (ODE):
\begin{equation*}
\begin{cases}
\dd S_0(t)=r(t)S_0(t)\dt, \quad t \in [0, T], \\
\ S_0(0)=s_0>0,
\end{cases}
\end{equation*}
where $r(t) $ is the appreciation rate of the bond at time $t$. The remaining $m$ assets are
stocks, and their prices are modeled by a system of stochastic differential equations (SDEs):
\begin{equation*}
\begin{cases}
\dd S_i(t)=S_i(t)[\beta_i(t)\dt+\sum_{j=1}^m\sigma_{ij}(t)\dd W_{j}(t)], \quad t \in [0, T], \\
\ S_i(0)=s_i>0,
\end{cases}
\end{equation*}
where $\beta_i(t) $ is the appreciation rate of the stock $i$ and
$\sigma_{ij}(t)$ is the volatility coefficient at time $t$.
Denote by $\beta:=\{(\beta_1(t), \cdots, \beta_m(t))^{\tr},\; t \in [0, T]\}$ the appreciation rate vector process and by $\sigma:=\{(\sigma_{ij}(t))_{m\times m},\; t \in [0, T]\}$ the volatility matrix process, along with the
{excess return rate vector process} $B:=\{B(t),\; t \in [0, T]\}$ where
\begin{equation*}
B(t) :=\beta(t)-r(t)\mathbf{1}, \quad t \in [0, T],
\end{equation*}
with $\mathbf{1}=(1, 1, \cdots, 1)^{\tr}$ denoting the $m$-dimensional unit vector.

\par
Following the monograph Karatzas and Shreve \cite{KS98}, we impose the following standing assumptions on the market parameters throughout the paper:
\begin{itemize}
\item The processes $r$, $\beta$ and $\sigma$ are all $\{{\cal F}_t\}_{t\geq0}$-progressively measurable;
\item The process $r$ is essentially bounded and the process $\beta$ satisfies
\begin{equation*}
\int_0^T\vert\beta(t)\vert\dt<\infty,\;\;\mbox{a.s.};
\end{equation*}
\item The process $\sigma$ is invertible;
\item The market price of risk process $\theta:=\{\theta(t),\; t \in [0, T]\}$ defined by
\begin{equation*}
\theta(t) :=\sigma(t)^{-1} B(t), \quad t \in [0, T]
\end{equation*}
satisfies
\begin{equation*}
\int_0^T\vert\theta(t)\vert^2\dt<\infty,\;\;\mbox{a.s.},
\end{equation*}
and $\theta$ is not identical to zero.
\item The positive local martingale
\[ t\mapsto\exp\left(-\frac{1}{2}\int_0^{t} \vert \theta(s)\vert^2 \ds-\int_0^{t}\theta(s)^{\tr}\dd W(s)\right), \quad t \in [0, T],\]
is a true martingale.
\end{itemize}
The last requirement is satisfied if the Novikov condition
$\BE{\exp\left(\frac{1}{2}\int_0^{T} \vert \theta(s)\vert^2 \ds\right)}<\infty$ holds.

The above assumptions ensure that our financial market is ``standard and complete''; refer to \cite[Definition 5.1, page 17]{KS98} and \cite[Theorem 6.6, page 24]{KS98} for the definitions.

\subsection{Investment Problem}
\noindent
There is a small investor (``she'') in the market whose transactions have no influence on the asset prices in the market. She has an initial endowment $x>0$ and invests in the financial market over the time period $[0, T]$.
Denote by $\pi_i(t)$ the total market value of her wealth invested in stock $i$ at time $t$, $ i=1, \cdots, m$. We allow short selling so that $\pi_i(t)$ can take negative values. We also assume that the trading of shares takes place continuously in a self-financing fashion (i.e., there is no consumption or income) and the market is frictionless (i.e., transactions do not incur any fees or costs). The investor's
portfolio (process) is $\pi=\{(\pi_1(\cdot), \cdots, \pi_m(\cdot))^{\tr},\;t\geq0\}$, and the corresponding
wealth process $X^{\pi}$ evolves according to the wealth SDE (see Karatazas and Shreve \cite{KS98}):
\begin{align}\label{eq:state-positive}
\begin{cases}
\dd X^{\pi}(t)=\Big[r(t)X^{\pi}(t)+\pi(t)^{\tr}\sigma(t)\theta(t)\Big]\dt+\pi(t)^{\tr}\sigma(t)\dd W(t), \quad t \in [0, T],\\
\ X^{\pi}(0)=x
\end{cases}
\end{align}
In our model, the investor makes investment decisions based on the information $\{{\cal F}_t\}_{t\geq0}$ from the financial market, which are strictly less than those represented by $\mathbf{F}$. So unlike many existing studies in portfolio selection, ours is a partial information model. The following defines precisely an admissible portfolio.
\begin{definition}
We call $\pi$ an admissible portfolio (or strategy) if
it is an $\{{\cal F}_t\}_{t\geq0}$-progressively measurable process on $[0,T]$ such that
\[\int_0^T\vert\sigma(t)^{\tr}\pi(t)\vert^2\dt<\infty,\;\;\mbox{a.s.}\]
and such that the corresponding wealth process $X^{\pi}$ determined by \eqref{eq:state-positive} is always nonnegative.
\end{definition}
Because the market is free of arbitrage, the last requirement of the definition is equivalent to that $X^{\pi}(T)$ is a nonnegative r.v..

For any admissible portfolio $\pi$, the SDE \eqref{eq:state-positive} admits a unique solution $X^{\pi}$ which is called an admissible wealth process, and $(X^{\pi}, \pi)$ is called an admissible pair. Note that \eqref{eq:state-positive} is linear in $X^{\pi}$ and $\pi$; so all the admissible pairs form a convex set.
Moreover, all the admissible pairs are $\{{\cal F}_t\}_{t\geq0}$-progressively measurable.
From now on, we only consider admissible portfolios unless otherwise specified.

The original, classical expected utility maximization (EUM) problem is formulated as
\begin{align}\label{eut}
\sup\limits_{\pi} &~\BE{u(X^{\pi}(T))}, \\
\mathrm{subject\ to} &~ (X^{\pi}, \pi) \text{ being an admissible pair with $X^{\pi}(0)=x$,} \nn
\end{align}
where $u$ is a given utility function.

In our model, however, in addition to the investment payoff from the financial market,
the investor will also receive the value of a claim $\pr$ (for instance, an excise of her company's stock option, a bonus, the down payment of a mortgage or even a lottery prize) at the maturity date $T$. She knows nothing about this claim -- hence an {\it intractable} claim -- except its {\it own} distribution before $T$. In particular, she does not know the {\it joint} distribution between this claim and the market, nor can she estimate this joint distribution purely based on the market data. Moreover, $\pr$ is $\mathbf{F}$-measurable, not $\BF_{T}$-measurable; so it is not a hedgable contingent claim in the financial market.
As a result, it is not meaningful to consider the utility of the total terminal wealth
\[\BE{u(X^{\pi}(T)+\pr)},\]
because to determine this value the investor needs to know the joint distribution of $X^{\pi}(T)$ and $\pr$.
Inspired by Hou and Xu \cite{HX16}, we consider instead the worst scenario over all the possible realizations of $\pr$, leading to the robust preference
\begin{align} \label{J_0def}
J_0(X):=\inf\limits_{Y\sim \pr}\;\BE{u(X+Y)},
\end{align}
where $X\in L^{0}_{\BF_{T}}$ is the payoff from the financial market, and
the infimum is taken over all $\mathbf{F}$-measurable r.v.'s $Y$ that are distributed the same as $\pr$. It turns out, as will be shown later, $J_0$ does not rely on the joint distribution between $X$ and $\pr$, but on their own distributions.

To summarize, we have the following investment model:
\begin{align}\label{eut+}
V_{0}(x):=&~\sup\limits_{\pi} \inf\limits_{Y\sim \pr}\; \BE{u(X^{\pi}(T)+Y)}, \\
\mathrm{subject\ to} &~ (X^{\pi}, \pi) \text{ being an admissible pair with $X^{\pi}(0)=x$.} \nn
\end{align}
The problem can be regarded as a game between the investor (who chooses the best portfolio $\pi$ and its output $X^{\pi}$ in the financial market) and the nature (who chooses $Y$, the worst realization of $\pr$, part of which may come from outside of the financial market).
In contrast to many classical game models, we can not switch the order of $\sup$ and $\inf$ in the problem \eqref{eut+}. It is not a classical control problem either, as we will show in the subsequent analysis that the objective functional is not the standard expectation in stochastic control theory (see, e.g., Yong and Zhou \cite{YZ99}). On the other hand, the classical EUM problem \eqref{eut} can be regarded as the special case of \eqref{eut+} where $\pr\equiv 0$, in which case the optimal value is denoted by $\veum(x)$.
\par
To ensure the problem \eqref{eut+} to be well-defined and to
avoid undue technicalities, we make the following assumption throughout this paper:
\begin{assumption}\label{boundedclaim}
The claim $\pr$ is almost surely lower bounded, that is,
\[Q_{\pr}(0)=\einf\pr>-\infty.\]
Moreover, the utility function $u$ is continuous on $\big[\min\big\{0,\game(0)\big\},\infty\big)$ and continuously differentiable on $\big(\min\big\{0,\game(0)\big\},\infty\big)$ with $u'$ being strictly decreasing to 0 at infinity.
Furthermore,
\begin{align*}
\BE{u(\pr)}<\infty.
\end{align*}

\end{assumption}

\begin{remark}
The lower boundedness assumption on $\pr$ is very mild and reasonable, satisfied in most practical cases. The last assumption $\BE{u(\pr)}<\infty$ is fulfilled if $\pr$ has a finite expectation, thanks to the concavity of $u$.
\end{remark}

Under Assumption \ref{boundedclaim}, the utility function $u$ is strictly increasing and strictly concave on the interval $[\min\{0,\game(0)\},\infty)$, under which we will show that
the problem \eqref{eut+} is well-posed, namely, it has a finite optimal value (see \citethm{wellposedness} below).

\section{Quantile Formulation}\label{secqf}
\noindent

In this section we turn the problem \eqref{eut+} into a static optimization problem and present its quantile formulation.

\subsection{A static optimization problem}

Employing the well-established martingale method, we can divide solving the problem \eqref{eut+} into two steps. In the first step, we find the optimal solution $X^*$ to the following {\it static} optimization problem
\begin{eqnarray}\label{static1}
\begin{array}{rl}
\sup\limits_{X\in L^{0}_{\BF_{T}}} &\; J_0(X),\\ [2mm]
\mathrm{subject\ to} &\; X\in\seta_{x},
\end{array}
\end{eqnarray}
where $\seta_{x}$ denotes the set of all possible nonnegative terminal wealth amounts at time $T$ with an initial endowment $x$. Note that $\seta_{x}$ is a convex set, thanks to the linearity of the SDE in \eqref{eq:state-positive}.
The second step is to find an admissible portfolio $\pi^*$ that replicates $X^*$, which will be an optimal portfolio to the original dynamic problem \eqref{eut+}.
As the financial market itself is standard and complete, the second step is standard (see Theorem \ref{Optimalportfolio} below); so we will first focus on the problem \eqref{static1}. 
\par
To solve \eqref{static1}, we need to first characterize the set $\seta_{x}$ in a more tractable form. The following result is well known (see, e.g., Karatazas and Shreve \cite{KS98}).
\begin{lemma}\label{martingalemethod}
We have the following relationship
\begin{align}\label{seta}
\big\{X\in L^{0}_{\BF_{T}}:\BE{\rho X}= x,\; X\geq 0\big\}\subseteq\seta_{x}
\subseteq\big\{X\in L^{0}_{\BF_{T}}:\BE{\rho X}\leq x,\; X\geq 0\big\},
\end{align}
where $\rho$ is called the pricing kernel or the stochastic discount factor, given by
\[ \rho:=\exp\left(-\int_0^{T}\Big(r(s)+\frac{1}{2}\vert \theta(s)\vert^2\Big)\ds-\int_0^{T}\theta(s)^{\tr}\dd W(s)\right). \]
\end{lemma}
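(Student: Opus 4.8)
The plan is to establish the two inclusions separately. The right-hand inclusion $\seta_x\subseteq\{X:\BE{\rho X}\leq x,\ X\geq0\}$ is a ``budget constraint'': no admissible strategy can be worth more than the initial endowment $x$ under the risk-neutral valuation $\BE{\rho\,\cdot}$. The left-hand inclusion $\{X:\BE{\rho X}=x,\ X\geq0\}\subseteq\seta_x$ is a replication statement, obtained from the martingale representation theorem, which is available because $\{\BF_t\}$ is the augmented Brownian filtration, i.e. the market is standard and complete. Throughout, let $\rho(t):=\exp\big(-\int_0^t(r(s)+\tfrac12|\theta(s)|^2)\ds-\int_0^t\theta(s)^{\tr}\dd W(s)\big)$ denote the discounted pricing kernel process, so that $\rho(0)=1$, $\rho(T)=\rho$, and $\dd\rho(t)=-\rho(t)\big(r(t)\dt+\theta(t)^{\tr}\dd W(t)\big)$.

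For the right-hand inclusion, fix an admissible pair $(X^{\pi},\pi)$ with $X^{\pi}(0)=x$. Applying It\^o's product rule to $\rho(t)X^{\pi}(t)$ and using the wealth SDE \eqref{eq:state-positive}, the drift terms cancel exactly (the drift of $\rho\,\dd X^{\pi}$ is killed by that of $X^{\pi}\dd\rho$ together with the cross-variation $\dd\langle\rho,X^{\pi}\rangle$), leaving $\dd(\rho(t)X^{\pi}(t))=\rho(t)\big(\sigma(t)^{\tr}\pi(t)-X^{\pi}(t)\theta(t)\big)^{\tr}\dd W(t)$. Thus $\rho(\cdot)X^{\pi}(\cdot)$ is a local martingale; being nonnegative (by admissibility $X^{\pi}\geq0$, and $\rho>0$), it is a supermartingale by Fatou's lemma, whence $\BE{\rho X^{\pi}(T)}=\BE{\rho(T)X^{\pi}(T)}\leq\rho(0)X^{\pi}(0)=x$. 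Since $X^{\pi}(T)$ is $\BF_T$-measurable and nonnegative, this shows $\seta_x$ lies in the right-hand set.

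For the left-hand inclusion, take $X\in L^{0}_{\BF_{T}}$ with $X\geq0$ and $\BE{\rho X}=x$, and put $M(t):=\BE[\BF_t]{\rho X}$, a uniformly integrable martingale with $M(0)=x$ and $M(T)=\rho X$. The martingale representation theorem yields an $\{\BF_t\}$-progressively measurable $\psi$ with $\int_0^T|\psi(t)|^2\dt<\infty$ a.s. and $M(t)=x+\int_0^t\psi(s)^{\tr}\dd W(s)$. Set $\widehat X(t):=M(t)/\rho(t)$, so $\widehat X(0)=x$, $\widehat X(T)=X$ and $\widehat X(\cdot)\geq0$. Applying It\^o's formula to $M(t)\rho(t)^{-1}$ (with $\dd(\rho(t)^{-1})=\rho(t)^{-1}\big((r(t)+|\theta(t)|^2)\dt+\theta(t)^{\tr}\dd W(t)\big)$) and matching the $\dd W$-coefficient against that of \eqref{eq:state-positive} forces the choice $\pi(t):=(\sigma(t)^{\tr})^{-1}\big(\rho(t)^{-1}\psi(t)+\widehat X(t)\theta(t)\big)$, legitimate since $\sigma$ is invertible; a short computation then shows the drifts agree as well, so $\widehat X$ solves \eqref{eq:state-positive} with this $\pi$, i.e. $\widehat X=X^{\pi}$. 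It remains to check $\pi$ is admissible: progressive measurability is clear, $\widehat X=X^{\pi}\geq0$ was noted, and $\int_0^T|\sigma(t)^{\tr}\pi(t)|^2\dt=\int_0^T|\rho(t)^{-1}\psi(t)+\widehat X(t)\theta(t)|^2\dt<\infty$ a.s., because $\rho(\cdot)^{-1}$ and $M(\cdot)$ (hence $\widehat X(\cdot)$) have continuous, therefore pathwise bounded, trajectories on $[0,T]$, while $\int_0^T|\psi(t)|^2\dt<\infty$ and $\int_0^T|\theta(t)|^2\dt<\infty$. Hence $X=\widehat X(T)=X^{\pi}(T)\in\seta_x$.

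The only mildly delicate points are the supermartingale step in the right-hand inclusion, which rests on nonnegativity plus Fatou rather than on any integrability assumption on $\pi$, and the admissibility verification in the left-hand inclusion, where one must confirm that $M(\cdot)/\rho(\cdot)$ is genuinely the wealth process of a portfolio meeting the square-integrability requirement. Neither poses a real obstacle: both are classical consequences of market completeness, which is why the result is quoted as well known and attributed to \cite{KS98}.
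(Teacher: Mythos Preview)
Your proof is correct and follows the standard argument. The paper does not actually prove this lemma; it simply states it as well known and refers to Karatzas and Shreve \cite{KS98}. In fact, the replication half of your argument (martingale representation, setting $\widehat X=M/\rho$, reading off $\pi$ from the diffusion coefficient, and checking admissibility via pathwise boundedness of continuous processes) is essentially reproduced later in the paper in the proof of \citethm{Optimalportfolio}, so your approach is fully aligned with the paper's own toolbox.
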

The constraint $\BE{\rho X}\leq x$ in \eqref{seta} is called the budget constraint.

According to our standing assumptions, $\theta$ is not identical to zero; so $\rho$ is a genuine random variable (i.e. it is not deterministic).
Also, as
$r$ is bounded and \[ t\mapsto\exp\left(-\frac{1}{2}\int_0^{t} \vert \theta(s)\vert^2 \ds-\int_0^{t}\theta(s)^{\tr}\dd W(s)\right), \quad t \in [0, T],\]
is a martingale (so that its expectation is a constant), we have $\BE{\rho}<\infty$.

Let $\setrho$ be the set of r.v.'s in $L^{0}_{\BF_{T}}$ that are comonotonic with $\rho$ and uniformly distributed on $(0,1)$. Since $\rho\in L^{0}_{\BF_{T}}$,
this set is non-empty and $\rho=Q_{\rho}(U)$ for any $U\in\setrho$ (see, e.g. Xu \cite{X14}). In particular, $\setrho$ is a singleton if and only if $F_{\rho}$ is a continuous function, in which case $\setrho=\{F_{\rho}(\rho)\}$.
This happens if both $r$ and $\theta$ are deterministic processes.

\par
We employ the quantile formulation method to tackle the problem \eqref{static1}, whose key
idea is to change the decision variable from $X$ to its quantile.
Denote by $Q_X$ the quantile function of a given r.v. $X$. The following lemma is critical in reformulating our problem into its quantile formulation.

\begin{lemma}\label{Jproperties}
We have
\begin{align}\label{Jexp}
J_0(X)=\int_0^1 u\big(Q_X(t)+\game(t)\big)\dt.
\end{align}
As a consequence, the functional $J_0$ is law-invariant and strictly increasing in the sense that $J_0(X)=J_0(Y)$ whenever $X\sim Y$ and one of $J_0(X)$ and $J_0(Y)$ is finite, and
$J_0(X_1)> J_0(X_2)$
whenever $X_1\geq X_2$, $\BP{X_1>X_2}>0$ and $J_0(X_2)<\infty$. In particular, \[J_0(X+\ep)> J_0(X)\]
for any constant $\ep>0$ if $J_0(X)<\infty$.
\end{lemma}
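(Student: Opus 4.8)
The key identity to establish is the formula \eqref{Jexp}, namely $J_0(X)=\int_0^1 u\big(Q_X(t)+\game(t)\big)\dt$; once this is in hand the claimed monotonicity and law-invariance properties follow almost immediately. The plan is to prove the two inequalities $J_0(X)\geq\int_0^1 u(Q_X(t)+\game(t))\dt$ and $J_0(X)\leq\int_0^1 u(Q_X(t)+\game(t))\dt$ separately, the first via a lower bound valid for every $Y\sim\pr$ and the second by exhibiting a near-optimal (or exactly optimal) coupling of $X$ and $\pr$.

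For the lower bound, I would fix an arbitrary $\mathbf{F}$-measurable $Y\sim\pr$ and appeal to the Hardy--Littlewood inequality (or the classical comonotonicity rearrangement inequality): for any two random variables, $\BE{\phi(X,Y)}$ for a supermodular $\phi$ is minimized when $X$ and $Y$ are anti-comonotonic. Since $u$ is concave, $(a,b)\mapsto u(a+b)$ is supermodular, so $\BE{u(X+Y)}$ over $Y\sim\pr$ is bounded below by the anti-comonotonic coupling value; but we want the opposite and $X$ and $Y$ together should be \emph{comonotonic} to minimize... let me reconsider. Actually $u(a+b)$ being concave in $a+b$ makes $u(a+b)$ \emph{submodular} in $(a,b)$ (the cross partial $u''(a+b)\le 0$), so $\BE{u(X+Y)}$ is \emph{minimized} when $X$ and $Y$ are comonotonic, i.e. when $Y=Q_\pr(F_X(X))$ in the continuous case, or more generally $Y=Q_\pr(U)$ for $U$ comonotonic with $X$ and uniform on $(0,1)$. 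The lower bound $J_0(X)\geq\int_0^1 u(Q_X(t)+\game(t))\dt$ then comes from: for the comonotonic pair $X=Q_X(U)$, $Y=Q_\pr(U)$ one computes $\BE{u(X+Y)}=\int_0^1 u(Q_X(t)+Q_\pr(t))\dt$ exactly, and this equals the infimum; for any other $Y\sim\pr$ the submodularity/rearrangement inequality gives $\BE{u(X+Y)}\geq$ this value. Care is needed because $Q_X$ need not be continuous and $X$ need not generate a uniform; the standard fix is to enlarge the probability space or use the construction in Xu \cite{X14} guaranteeing existence of a $U$ comonotonic with $X$, uniform on $(0,1)$, with $X=Q_X(U)$ a.s., and then set $Y:=Q_\pr(U)$, which is legitimate since $\mathbf{F}\supsetneq\BF_T$ leaves room to realize such a $U$. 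This construction is where the hypothesis $\BF_T\subsetneq\mathbf{F}$ is genuinely used.

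The main obstacle, and the step I would spend the most care on, is the measurability/integrability bookkeeping: one must check that $\BE{u(X+Y)}$ is well-defined (possibly $-\infty$ but not $+\infty$ on the relevant side) for all admissible $Y$, using Assumption \ref{boundedclaim} — in particular that $u$ is bounded above on any half-line bounded below (a consequence of concavity and $u'\to 0$), the lower bound $\game(0)>-\infty$, and $\BE{u(\pr)}<\infty$ — so that the rearrangement inequality can be applied without running into $\infty-\infty$. One should also justify that the comonotonic coupling actually attains the infimum rather than merely bounding it, which the rearrangement inequality delivers directly once the integrals are well-posed. Finally, from \eqref{Jexp} the remaining assertions are routine: law-invariance is immediate since $Q_X$ depends only on the law of $X$; strict monotonicity follows because $X_1\geq X_2$ with $\BP{X_1>X_2}>0$ forces $Q_{X_1}(t)\geq Q_{X_2}(t)$ for all $t$ with strict inequality on a set of positive Lebesgue measure in $(0,1)$, and then strict monotonicity of $u$ (valid on $[\min\{0,\game(0)\},\infty)$, which contains all the arguments $Q_{X_i}(t)+\game(t)$ since $X_i\ge 0$ and $\game(t)\ge\game(0)$) upgrades this to a strict inequality for the integral, provided $J_0(X_2)<\infty$ so the subtraction is legitimate; the special case $X_1=X+\ep$, $X_2=X$ is then immediate.
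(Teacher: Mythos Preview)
Your proposal is correct and follows essentially the same route as the paper: both arguments hinge on the fact that for concave $u$ the comonotonic coupling of $X$ and $Y\sim\pr$ minimizes $\BE{u(X+Y)}$ (you phrase this via submodularity of $(a,b)\mapsto u(a+b)$, the paper via the comonotonic sum being extremal in convex order), and then use that the comonotonic sum has quantile $Q_X+Q_\pr$ to obtain \eqref{Jexp}; the monotonicity argument is likewise identical. Your discussion of realizing the uniform $U$ comonotonic with $X$ on $(\Omega,\mathbf{F},\bp)$ and the integrability bookkeeping is more explicit than the paper's, which simply asserts that the infimum is attained at the comonotonic coupling.
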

\begin{proof}
Recall that two r.v.'s $X$ and $Y$ are called comonotonic if
$$(X(\omega)-X(\omega'))(Y(\omega)-Y(\omega'))\geq 0 \quad\mbox{$\dd\bp\times\dd\bp$ almost surely,}$$ and called anti-comonotonic if $X$ and $-Y$ are comonotonic (see, e.g. Schmeidler \cite{S86} and Yaari \cite{Y87}).
It is well known that comonotonic r.v's have the minimum convex order
(see, e.g., M\"uller \cite{M97}, Dhaene et al. \cite{DDGKV2002a,DDGKV2002b});\footnote{For a pair of r.v.'s $X$ and $Y$, we say that $X$ is smaller than $Y$ in the sense of convex order, if $\BE{g(X)}\leq \BE{g(Y)}$ for all convex functions $g$ such that both expectations exist.} so the infimum in \eqref{J_0def} is obtained if $X$ and $Y^*$ are comonotonic with $Y^*\sim \pr$, in which case $$Q_{X+Y^*}= Q_X+Q_{Y^*}\quad\mbox{a.e.}.$$ Consequently,
\begin{align*}
J_0(X)=\int_0^1 u(Q_{X+Y^*}(t))\dt=\int_0^1 u\big(Q_X(t)+Q_{Y^*}(t)\big)\dt=\int_0^1 u\big(Q_X(t)+\game(t)\big)\dt.
\end{align*}
So \eqref{Jexp} holds.

Next, if $X_1\geq X_2$ and $\BP{X_1>X_2}>0$, then $Q_{X_1}\geq Q_{X_2}$, and $Q_{X_1}(t)> Q_{X_2}(t)$ for some $t\in(0,1)$. By the right-continuity of quantiles, we have $Q_{X_1}>Q_{X_2}$ on some right neighborhood of $t$. Because $u$ is strictly increasing and $J_0(X_2)<\infty$, we have
\[J_0(X_1)=\int_0^1 u\big(Q_{X_1}(t)+\game(t)\big)\dt
> \int_0^1 u\big(Q_{X_2}(t)+\game(t)\big)\dt=J_0(X_2),\]
proving the strictly monotonicity.
\qed
\end{proof}

Let us consider the following relatexed problem
\begin{eqnarray}\label{static2}
\begin{array}{rl}
\sup\limits_{X\in L^{0}_{\BF_{T}} } &\quad J_0(X), \\ [2mm]
\mathrm{subject\ to} &\quad \BE{\rho X}\leq x,\; X\geq 0.
\end{array}
\end{eqnarray}

By the strictly monotonicity of $J_{0}$, we see that any optimal solution $X^{*}$ to the problem \eqref{static2} must satisfy
$\BE{\rho X^{*}}=x$, so $X^{*}\in\seta_{x}$ by \citelem{martingalemethod}.
Hence, any optimal solution to \eqref{static2} also solves the problem \eqref{static1}.
Furthermore, any admissible portfolio that replicates $X^{*}$ will be an optimal portfolio to the problem \eqref{eut+}. Therefore, our problem reduces to solving \eqref{static2}.

Because $J_0$ is law-invariant, henceforth we will define, with a slight abuse of notation:
\[J_0(Q_X):=J_0(X)\equiv \int_0^1 u\big(Q_X(t)+\game(t)\big)\dt.\]

By the expression \eqref{Jexp}, we see that the original problem \eqref{eut+} is not a standard stochastic control problem; hence classical stochastic control theory (e.g. Yong and Zhou \cite{YZ99}) is not applicable directly. To overcome this difficulty, we employ
a refinement of the quantile formulation approach proposed in \cite{X21,X23} to tackle \eqref{eut+}.

Let $\setq$ denote the set of all quantiles generated by nonnegative r.v.'s, that is,
\begin{align*}
\setq :&=\big\{Q: (0,1)\to \R \;\big|\; \text{$Q$ is the quantile for some nonnegative r.v. $X$}\big\}.
\end{align*}
It is easy to see
\begin{align*}
\setq&=\big\{Q: (0,1)\to [0,\infty) \;\big|\; \text{$Q$ is RCI}\big\}.
\end{align*}
Evidently, $\setq$ is a convex set.

\subsection{Well-posedness}
Before tackling our problem \eqref{eut+},
it is natural to ask whether the problem is well-posed; that is, whether its optimal value is finite. An infinite optimal value signals a trivial optimization problem in which trade-offs between competing sub-objectives are not properly modeled.
The following result answers the question of well-posedness completely.

Recall that $ \veum$ and $V_{0}$ are the value functions of the problems \eqref{eut}
and \eqref{eut+}, respectively.

\begin{theorem}\label{wellposedness}
The following statements are equivalent.
\begin{enumerate}
\item $ \veum(x)<\infty$ for some $x>0$.
\item $ \veum(x)<\infty$ for all $x>0$.
\item $V_{0}(x)<\infty$ for all $x>0$.
\item $V_{0}(x)<\infty$ for some $x>0$.
\item There exists a constant $\lam>0$ such that
\[\BE{u\big((u')^{-1}(\lam\rho)\big)}<\infty,\quad \BE{\rho(u')^{-1}(\lam\rho)}<\infty.\]
\end{enumerate}
\end{theorem}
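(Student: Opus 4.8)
The plan is to establish the chain of equivalences by the obvious implications $(2)\Rightarrow(1)$, $(3)\Rightarrow(4)$, and $(1)\Leftrightarrow(2)$, $(3)\Leftrightarrow(4)$ via scaling arguments, and then to close the loop by proving $(5)\Rightarrow(2)$, $(2)\Rightarrow(5)$ — using the classical static characterization of $\veum$ — together with $(2)\Rightarrow(3)$ and $(4)\Rightarrow(1)$, which link the non-robust problem to the robust one. The engine behind everything is the quantile representation $J_0(X)=\int_0^1 u(Q_X(t)+\game(t))\dt$ from \citelem{Jproperties} and the budget constraint $\BE{\rho X}\leq x$ from \citelem{martingalemethod}, which together let us treat \eqref{eut+} as a concave static problem over $\setq$ with a single linear constraint.

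First I would dispatch $(1)\Leftrightarrow(2)$ and $(3)\Leftrightarrow(4)$. For $\veum$ this is standard: if $(X^{\pi},\pi)$ is admissible with $X^{\pi}(0)=x$, then $\frac{x'}{x}X^{\pi}$ is the terminal wealth of an admissible pair starting from $x'$; since $u$ is concave and $u(0)$ (or $u(\game(0))$) is finite by Assumption \ref{boundedclaim}, a finite value at one initial endowment propagates to all of them by a convexity/interpolation estimate of the form $u(\lambda a+(1-\lambda)b)\geq \lambda u(a)+(1-\lambda)u(b)$ applied pathwise and then in expectation. The same scaling applies verbatim to $V_0$ because $\seta_x$ scales linearly in $x$ and $J_0$ is evaluated on $Q_X+\game$; here one uses in addition that $\game(0)>-\infty$ (Assumption \ref{boundedclaim}) so that $Q_X(t)+\game(t)$ stays in the domain of $u$. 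Next, $(2)\Rightarrow(3)$: since $\einf\pr=\game(0)>-\infty$, for $X\geq 0$ we have $Q_X(t)+\game(t)\geq Q_X(t)+\game(0)$, and as $u$ is concave it is dominated by an affine function $u(z)\leq a+bz$ with $b=u'(\game(0))$ (or $u'(0)$) — hence $J_0(X)\leq a + b\,\BE{X} + b\,\game(0)\le a+b\,\game(0)+ b\,\BE{X}$, and one bounds $\BE{X}$ in terms of $\veum$-type quantities, or more directly observes $J_0(X)\le \veum(x)+u(\game(0))$-type bound via $u(p+q)\le u(p)+u(q)-u(0)$ is false in general, so instead I would use $u(Q_X(t)+\game(t))\le u(Q_X(t))+u'(0)\game(t)$ if $\game(t)\ge 0$ and handle $\game(t)<0$ separately using the lower bound — integrating gives $V_0(x)\le \veum(x)+u'(0)\int_0^1\game(t)^+\dt + (\text{finite})$, and $\int_0^1\game(t)^+\dt<\infty$ follows from $\BE{\pr^+}<\infty$ which is implied by $\BE{u(\pr)}<\infty$ plus the lower bound. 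Symmetrically, $(4)\Rightarrow(1)$: choosing the admissible comparison $Y\equiv$ a constant below $\einf\pr$ shows $J_0(X)\ge \BE{u(X^\pi(T)+\game(0))}$, and since $\game(0)>-\infty$, if $V_0(x)=\infty$ then $\veum(x)=\infty$ by a shift argument (replace $u(\cdot)$ by $u(\cdot+\game(0))$, which satisfies the same structural hypotheses).

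The substantive part is $(2)\Leftrightarrow(5)$, which is the well-known well-posedness criterion for utility maximization in a complete market. For $(5)\Rightarrow(2)$: with $\lambda>0$ as in (5), set $\hat X:=(u')^{-1}(\lambda\rho)$; this is the pointwise maximizer of $u(X)-\lambda\rho X$, so for every $X\geq 0$ with $\BE{\rho X}\le x$ one has $\BE{u(X)}\le \BE{u(\hat X)} +\lambda(\BE{\rho X}-\BE{\rho\hat X})\le \BE{u(\hat X)}+\lambda x<\infty$, giving $\veum(x)<\infty$ (the left inequality is the elementary Fenchel bound $u(X)\le u(\hat X)+\lambda\rho(X-\hat X)$). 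For $(2)\Rightarrow(5)$: this is the contrapositive — if no such $\lambda$ exists, then for every $\lambda>0$ at least one of $\BE{u((u')^{-1}(\lambda\rho))}$ or $\BE{\rho(u')^{-1}(\lambda\rho)}$ is infinite; since $\lambda\mapsto\BE{\rho(u')^{-1}(\lambda\rho)}$ is decreasing and $\lambda\mapsto\BE{u((u')^{-1}(\lambda\rho))}$ is increasing (as $(u')^{-1}$ is decreasing), a standard argument shows one can then construct feasible $X$ with $\BE{u(X)}$ arbitrarily large, so $\veum(x)=\infty$; I would lift this from \cite{KS98} or \cite{JZ08}-type references rather than reprove it.

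\textbf{Main obstacle.} The delicate step is the comparison $(2)\Rightarrow(3)$ (and its mirror $(4)\Rightarrow(1)$): one must control the cross term $u(Q_X(t)+\game(t))$ against $u(Q_X(t))$ uniformly in $t$, and the only tool is concavity plus the one-sided bound $\game(t)\ge\game(0)>-\infty$. The clean way is to write, using concavity and $u'$ decreasing, $u(p+q)-u(p)=\int_p^{p+q}u'\le u'(\min\{p,p+q\})\,|q|\le u'(\min\{0,\game(0)\})\,(|q|)$ when $q$ can be negative but bounded below — and then integrate against $\dt$, the finiteness of $\int_0^1|\game(t)|\dt=\BE{|\pr|}$ being guaranteed because $\BE{u(\pr)}<\infty$ together with $\einf\pr>-\infty$ forces $\BE{\pr^+}<\infty$ (else $u(\pr)$, being eventually concave-affine-dominated from above but the lower tail controlled, would still be integrable — so more carefully, $\BE{u(\pr)}<\infty$ is an upper bound, and integrability of $\pr^+$ follows since $u$ is increasing and, being concave with $u'\to0$, grows to $+\infty$ no slower than affinely unless bounded, in which case the claim is trivial). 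Handling the possibly-bounded-above $u$ case separately keeps this from becoming technical.
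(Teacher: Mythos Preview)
Your overall architecture matches the paper's: a cycle through $1\Rightarrow 2\Rightarrow 3\Rightarrow 4\Rightarrow 1$ together with $1\Leftrightarrow 5$, with the scaling for $(1)\Leftrightarrow(2)$, the Fenchel bound for $(5)\Rightarrow(1)$, and the appeal to \cite{JXZ08} for the converse all essentially as in the paper.

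There is, however, a genuine gap in your $(2)\Rightarrow(3)$. You discard the inequality $u(a+b)\le u(a)+u(b)-u(0)$ as ``false in general'', but for a concave $u$ and $a,b\ge 0$ it is true: from $\frac{u(a+b)-u(a)}{b}\le\frac{u(b)-u(0)}{b}$ one gets exactly this. This is precisely the device the paper uses (after first replacing $Y$ by $\max\{Y,0\}$ so that both arguments are nonnegative), and it yields
\[
J_0(X)\le \BE{u(X)}+\BE{u(\max\{\pr,0\})}-u(0),
\]
which is finite because $\BE{u(\pr)}<\infty$ and $u(\min\{\game(0),0\})$ is finite. Your substitute route instead needs $\int_0^1 \game(t)^+\,\dt=\BE{\pr^+}<\infty$, and your justification for this---that $\BE{u(\pr)}<\infty$ forces $\BE{\pr^+}<\infty$ because $u$ ``grows no slower than affinely''---is wrong: a concave $u$ with $u'(\infty)=0$ grows strictly sublinearly, so for instance with $u(x)=\sqrt{x}$ one can have $\BE{u(\pr)}<\infty$ while $\BE{\pr^+}=\infty$. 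Assumption~\ref{boundedclaim} does not give you $\BE{\pr^+}<\infty$, and the paper's Remark even notes that finite expectation of $\pr$ is only a \emph{sufficient} condition for $\BE{u(\pr)}<\infty$. A secondary issue is that your bound uses $u'(0)$, which may be $+\infty$ under Assumption~\ref{boundedclaim} (differentiability is only assumed on the open interval). Once you reinstate the concave subadditivity inequality, both problems disappear and your proof goes through.
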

\begin{proof}
The proof consists of the following steps: $1\Longrightarrow 2 \Longrightarrow 3\Longrightarrow 4 \Longrightarrow 1$, $1\Longrightarrow 5$ and
$5\Longrightarrow 1$.

$1\Longrightarrow 2:$ The proof is essentially from \cite{JXZ08}, which is however reproduced here for reader's convenience.
Suppose that $ \veum(x)<\infty$ for some $x>0$.
Fix an arbitrary $y>0$. Then $Y\in\seta_{y}$ if and only if $X=\frac{x}{y}Y\in\seta_{x}$ due to the linearity of the SDE in \eqref{eq:state-positive}.
If $0<y\leq x$, then, because $u$ is increasing,
\begin{align*}
\veum(y)&=\sup_{Y\in\seta_{y}}\;\BE{u(Y)}=\sup_{X\in\seta_{x}}\;\BE{u\Big(\frac{y}{x}X\Big)}\leq \sup_{X\in\seta_{x}}\;\BE{u(X)}=\veum(x)<\infty.
\end{align*}
If $y>x$, then, because $u$ is concave, we have for $k>1$, $a>0$:
\[ \frac{u(ka)-u(a)}{ka-a}\leq \frac{u(a)-u(0)}{a-0},\]
i.e,
\[u(ka)\leq k u(a)-(k-1)u(0).\]
It hence follows
\begin{align*}
\veum(y) &=\sup_{X\in\seta_{x}}\;\BE{u\Big(\frac{y}{x}X\Big)}\\
&\leq \sup_{X\in\seta_{x}}\;\BE{\frac{y}{x} u(X)-\frac{y-x}{x}u(0)} \\
&=\frac{y}{x} \veum(x)-\frac{y-x}{x}u(0)\\
&<\infty.
\end{align*}

$2\Longrightarrow 3:$
Because $u$ is concave, we have for $a,b>0$:
\begin{align*}
\frac{u(a+b)-u(a)}{b}\leq \frac{u(b)-u(0)}{b},
\end{align*}
i.e.,
\begin{align}\label{uineq}
u(a+b)\leq u(a)+u(b)-u(0).
\end{align}
This together with the monotonicity of $u$ yields
\begin{align}
J_0(X)&=\inf\limits_{Y\sim \pr}\;\BE{u(X+Y)}\nn\\
&\leq \inf\limits_{Y\sim \pr}\;\BE{u(X+\max\{Y,0\})}\nn\\
&\leq\inf\limits_{Y\sim \pr}\big(\BE{u(X)+u(\max\{Y,0\})-u(0)}\big)\nn\\
&=\BE{u(X)}+\BE{u(\max\{\pr,0\})-u(0)}.\label{v0leqv}
\end{align}
Maximizing both sides over $X\in\seta_{x}$ yields
\begin{align*}
V_{0}(x)&\leq \veum(x)+\BE{u(\max\{\pr,0\})-u(0)}.
\end{align*}
Thanks to \citeassmp{boundedclaim}, we have
\begin{align*}
\BE{u(\max\{\pr,0\})-u(0)}&=\BE{u(\pr)-u(\min\{\pr,0\})} \leq \BE{u(\pr)}-u(\min\{\game(0),0\})<\infty.
\end{align*}
The implication thus follows.

$3\Longrightarrow 4:$ This is trivial.

$4\Longrightarrow 1:$
Suppose $\game(0)<0$. By the concavity of $u$, for $x> y\geq 1$, we have
\[ \frac{u(x)-u(y)}{x-y}\leq u'(1);\]
so
\[ u(y)\geq u(x)+u'(1)(y-x).\]
This together with the monotonicity of $u$ yields, for any r.v. $X\geq 0$,
\begin{align}
J_0(X)&=\inf\limits_{Y\sim \pr}\;\BE{u(X+Y)}\nn\\
&\geq\inf\limits_{Y\sim \pr}\;\BE{u(X+\einf Y)}
= \BE{u(X+\game(0))}\nn\\
&\geq \BE{u(X+\game(0))\idd{X>1+|\game(0)|}}+\BE{u(\game(0))\idd{0\leq X\leq 1+|\game(0)|}}\nn\\
&\geq \BE{\big(u(X)+u'(1)\game(0)\big)\idd{X>1+|\game(0)|}}- |u(\game(0))|\nn\\
&\geq \BE{u(X)\idd{X>1+|\game(0)|}}-u'(1)|\game(0)|- |u(\game(0))|\nn\\
&=\BE{u(X)}-\BE{u(X)\idd{0\leq X\leq 1+|\game(0)|}}-u'(1)|\game(0)|- |u(\game(0))|\nn\\
&\geq \BE{u(X)}- |u(1+|\game(0)|)|-u'(1)|\game(0)|- |u(\game(0))|.\label{v0geqv}
\end{align}
The last inequality also holds if $\game(0)\geq0$, because in this case
\begin{align*}
J_0(X)&=\inf\limits_{Y\sim \pr}\;\BE{u(X+Y)}\geq
\BE{u(X+\game(0))} \geq \BE{u(X)}.
\end{align*}

Maximizing both sides of \eqref{v0geqv} over $X\in\seta_{x}$ yields
\begin{align*}
V_{0}(x)&\geq \veum(x)-|u(1+|\game(0)|)|-u'(1)|\game(0)|- |u(\game(0))|,
\end{align*}
proving the desired implication.

$1\Longrightarrow 5:$
If $\BE{\rho(u')^{-1}(\lam\rho)}=\infty$ for all $\lam>0$, then by Jin, Xu and Zhou \cite[Theorem 3.1]{JXZ08}, $ \veum(x)=\infty$ for all $x>0$, contradicting Statement $1$.
So there exists some $\lam>0$ such that $$x_{\lam}:=\BE{\rho(u')^{-1}(\lam\rho)}<\infty.$$
Then by the definition of $\veum$, we have
\[\BE{u\big((u')^{-1}(\lam\rho)\big)}\leq \veum(x_\lam).\]
But the right hand side is finite by Statement $2$, which has been proved to be a consequence of Statement $1$. This proves the desired implication.

$5\Longrightarrow 1:$
Let $X_0:=(u')^{-1}(\lam\rho)$ and $x_0:=\BE{\rho X_0}$, then $\BE{u(X_0)}<\infty$ and $x_0<\infty$ by Statement 5.
It thus follows that
\begin{align*}
\veum(x_0)&=\sup_{X\in\seta_{x_0}}\;\BE{u(X)}\\
&\leq \sup_{X\in\seta_{x_0}}\;\BE{u(X)-\lam\rho X}+\lam x_0\\
&\leq \BE{u(X_0)-\lam\rho X_0}+\lam x_0\\
&=\BE{u(X_0)}<\infty,
\end{align*}
where the first equality is due to that $\BE{\rho X}\leq x_0$ for any $X\in\seta_{x_0}$, and
the second equality is due to $\omega$-wise maximization.
The proof is complete.
\qed
\end{proof}

From now on we assume that the value functions of the problems \eqref{eut}
and \eqref{eut+} are both finite, which is technically equivalent to that Statement 5 in Theorem \ref{wellposedness} holds.

\begin{lemma}\label{unboundedV}
The value function $V_{0}$ to the problem \eqref{eut+} is concave and strictly increasing on $(0,\infty)$.
\end{lemma}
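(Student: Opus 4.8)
The plan is to prove concavity and strict monotonicity of $V_0$ directly from the representation of the problem as the relaxed quantile problem \eqref{static2}, exploiting the concavity of $u$ and the convexity of the feasible set $\seta_x$ (or its budget-constraint relaxation). For concavity, I would fix $x_1,x_2>0$ and $\alpha\in(0,1)$, and for arbitrary admissible $X_i\in\seta_{x_i}$ consider $X_\alpha:=\alpha X_1+(1-\alpha)X_2$. By the linearity of the wealth SDE \eqref{eq:state-positive} (equivalently, because $\seta_x$ is a convex cone-like family in the sense that $\BE{\rho X_i}\le x_i$ implies $\BE{\rho X_\alpha}\le \alpha x_1+(1-\alpha)x_2$), $X_\alpha$ is admissible with initial endowment $\alpha x_1+(1-\alpha)x_2$. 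The key point is then that $J_0$ is concave: using $J_0(X)=\int_0^1 u(Q_X(t)+\game(t))\dt$ from \citelem{Jproperties}, together with the elementary fact that $Q_{\alpha X_1+(1-\alpha)X_2}(t)\geq \alpha Q_{X_1}(t)+(1-\alpha)Q_{X_2}(t)$ when the $X_i$ are comonotonic — but more robustly, I would instead argue via convex order: $\alpha X_1+(1-\alpha)X_2$ dominates in convex order a random variable whose quantile is $\alpha Q_{X_1}+(1-\alpha)Q_{X_2}$, and since $-u$ is convex this can only decrease $J_0$ relative to... Actually the cleanest route is to work with quantiles directly in problem \eqref{static2}: the map $Q\mapsto J_0(Q)=\int_0^1 u(Q(t)+\game(t))\dt$ is concave on $\setq$ by concavity of $u$, and the feasible set $\{Q\in\setq:\int_0^1 Q(t)Q_\rho(t)\,\dt\le x\}$ scales affinely in $x$. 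Hence taking near-optimal $Q_i$ for $V_0(x_i)$, the convex combination $\alpha Q_1+(1-\alpha)Q_2$ is feasible for endowment $\alpha x_1+(1-\alpha)x_2$ and gives $J_0(\alpha Q_1+(1-\alpha)Q_2)\geq \alpha J_0(Q_1)+(1-\alpha)J_0(Q_2)$, so letting the $Q_i$ approach optimality yields $V_0(\alpha x_1+(1-\alpha)x_2)\geq \alpha V_0(x_1)+(1-\alpha)V_0(x_2)$.

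For strict monotonicity, I would use that if $x_1<x_2$ then any $X\in\seta_{x_1}$ satisfies $X+\ep\rho^{-1}$-type perturbations... more simply, $\frac{x_2}{x_1}X\in\seta_{x_2}$ by linearity of the SDE, and $\frac{x_2}{x_1}X\geq X$ with strict inequality on $\{X>0\}$ which has positive probability (since $X\equiv 0$ is clearly suboptimal when the unconstrained optimizer is nontrivial — here I would invoke that the value is finite and that $J_0(\ep)>J_0(0)$ for $\ep>0$ by the last clause of \citelem{Jproperties}). Then the strict monotonicity clause of \citelem{Jproperties} gives $J_0(\tfrac{x_2}{x_1}X)>J_0(X)$, and taking $X$ near-optimal for $V_0(x_1)$ gives $V_0(x_2)>V_0(x_1)$. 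One must be slightly careful to ensure the relevant $J_0$ values are finite — this is exactly where the standing well-posedness assumption (Statement 5 of \citethm{wellposedness}) and the finiteness estimates \eqref{v0leqv}, \eqref{v0geqv} are used.

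The main obstacle I anticipate is the concavity step, specifically justifying that $Q_{\alpha X_1+(1-\alpha)X_2}$ (or the relevant substitute) interacts correctly with $J_0$. Working with random variables, $Q_{\alpha X_1+(1-\alpha)X_2}$ is not simply $\alpha Q_{X_1}+(1-\alpha)Q_{X_2}$ in general; the pointwise-quantile inequality goes the wrong way unless one has comonotonicity. The clean fix is to pass entirely to the quantile formulation \eqref{static2}: since $J_0$ is law-invariant and the budget constraint $\BE{\rho X}\le x$ can be written, for the worst-case (anti-comonotonic-with-$\rho$) allocation, purely in terms of $Q_X$, the problem is genuinely a concave program in the quantile variable $Q\in\setq$, and concavity of $V_0$ is then immediate from concavity of $J_0(\cdot)$ and affine dependence of the constraint on $x$. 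I would therefore structure the proof to reference this quantile reformulation (which the paper is about to develop anyway), rather than attempting to manipulate quantiles of sums of random variables directly.
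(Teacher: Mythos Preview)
Your concavity argument is sound and essentially matches the paper's: the paper observes that $J_0(X)=\inf_{Y\sim\pr}\BE{u(X+Y)}$ is concave in $X$ (as an infimum of concave functionals), so concavity of $V_0$ follows from the affine scaling of $\seta_x$ in $x$. Your quantile route is a valid alternative and reaches the same conclusion.

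The strict monotonicity argument, however, has a genuine gap. For each near-optimal $X$ with $\BP{X>0}>0$ you correctly get $J_0(\tfrac{x_2}{x_1}X)>J_0(X)$, but this only yields $V_0(x_2)>J_0(X)$; taking the supremum over such $X$ recovers merely $V_0(x_2)\geq V_0(x_1)$. The strict inequality can collapse in the limit unless you either control the gap $J_0(\tfrac{x_2}{x_1}X)-J_0(X)$ uniformly over near-optimizers or know that an optimizer exists --- neither of which is available at this point in the paper. The paper sidesteps this by a contradiction argument that exploits the already-proven concavity: it first shows $\lim_{x\to\infty}V_0(x)=u(\infty)$ (via the constant strategy $X=x/\BE{\rho}$ and Fatou), then argues that if $V_0$ were increasing but not strictly so, concavity would force it to attain its maximum $u(\infty)$ at some finite $x$. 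This is refuted by a \emph{uniform} estimate: choosing $t_0\in(0,1)$ with $\int_{t_0}^1 Q_\rho(1-s)\ds>\tfrac12\BE{\rho}$ and applying the Hardy--Littlewood inequality gives $Q_X(t_0)\leq 2x/\BE{\rho}$ for every feasible $X$, whence $J_0(X)\leq \int_0^{t_0}u\big(\tfrac{2x}{\BE{\rho}}+\game(s)\big)\ds+(1-t_0)u(\infty)<u(\infty)$. The key idea you are missing is to convert ``not strictly increasing'' into ``attains its global maximum'' via concavity, and then refute the latter with a budget-driven upper bound that is uniform over the feasible set rather than a pointwise comparison.
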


\begin{proof}
It follows immediately from the concavity of $u$ that
$J_0$ is concave. The concavity of $V_{0}$ thus comes from the convexity of $\seta_{x}$ and concavity of $J_0$.

Next, we show $$\lim_{x\to\infty} V_{0}(x)=u(\infty).$$
In fact, $X=\frac{x}{\BE{\rho}}$ is a feasible solution to the problem \eqref{static2} with the initial endowment $x>0$; so
\begin{align*}
\liminf_{x\to\infty} V_{0}(x)\geq \lim_{x\to\infty} J_0\Big(\frac{x}{\BE{\rho}}\Big)=\lim_{x\to\infty}\int_0^1 u\Big(\frac{x}{\BE{\rho}}+\game(t)\Big)\dt\geq u(\infty)
\end{align*}
by Fatou's lemma. For any feasible solution $X$ to the problem \eqref{eut+}, since $u$ is increasing,
\begin{align*}
J_0(X)=\int_0^1 u\big(Q_X(t)+\game(t)\big)\dt\leq u(\infty),
\end{align*}
which shows that $V_{0}(x)\leq u(\infty)$. So we conclude $\lim_{x\to\infty} V_{0}(x)=u(\infty)$.

We now show that $V_{0}$ is strictly increasing.
First, it follows from \citelem{Jproperties} that $V_{0}$ is increasing. Suppose $V_{0}$ was not strictly increasing, then it would be a constant on some interval. Because $V_{0}$ is concave and finite at every point, $V_{0}$ must reach its (finite) maximum on that interval. Because $V_{0}$ is increasing and $\lim_{x\to\infty} V_{0}(x)=u(\infty)$, we conclude $ V_{0}(x)=u(\infty)$ for some $x>0$.
This is impossible if $u(\infty)=\infty$ as $V_{0}(x)<\infty$. Hence $V_{0}(x)=u(\infty)<\infty$.
By the monotone convergence theorem, there exists ${t_0}\in (0,1)$ such that
\[\int_{t_0}^1 Q_{\rho}(1-s)\ds>\frac{1}{2}\int_0^1 Q_{\rho}(1-s)\ds=\frac{1}{2}\BE{\rho}.\]
Let $X$ be any feasible solution to the problem \eqref{static2}. Then it follows from the Hardy--Littlewood inequality and the non-negativity and monotonicity of quantiles that
\begin{align*}
x\geq \BE{\rho X} &\geq \int_{0}^1 Q_X(s)Q_{\rho}(1-s)\ds\\
&\geq \int_{t_0}^1 Q_X(s)Q_{\rho}(1-s)\ds\\
&\geq Q_X(t_0)\int_{t_0}^1 Q_{\rho}(1-s)\ds\geq \frac{1}{2}Q_X(t_0)\BE{\rho},
\end{align*}
which yields
\begin{align*}
J_0(X)&=\int_0^1 u\Big(Q_X(s)+\game(s)\Big)\ds\\
&\leq\int_0^{t_0} u\Big(\frac{2x}{\BE{\rho}}+\game(s)\Big)\ds
+\int_{t_0}^1 u(\infty)\ds.
\end{align*}
Maximizing both sides over $X\in\seta_{x}$ leads to
\begin{align*}
V_{0}(x)&\leq \int_0^{t_0} u\Big(\frac{2x}{\BE{\rho}}+\game(s)\Big)\ds
+\int_{t_0}^1 u(\infty)\ds \\
&<\int_0^{t_0}u(\infty)\ds+\int_{t_0}^1 u(\infty)\ds \\
&=u(\infty),
\end{align*}
where the last inequality is due to the strictly monotonicity of $u$ and the fact that $u(\infty)<\infty$. This contradicts that $V_{0}(x)=u(\infty)$. The proof is thus complete.
\qed
\end{proof}

By \citelem{Jproperties}, the functional $J_0$ is law-invariant and strictly increasing, so we can now apply \cite[Theorem 9]{X14} to obtain the following result.
\begin{proposition}\label{quantile}
A random variable $X^*\in L^{0}_{\BF_{T}}$ is an optimal solution to the problem \eqref{static1} if and only if
it can be expressed as
\[X^*=\barq(1-U)\]
where $U\in\setrho$ and $\barq$ is an optimal solution to the following quantile optimization problem
\begin{align}\label{q1}
\sup\limits_{Q\in\setq} &\;\; J_{0}(Q) \\
\mathrm{subject\ to} &\;\; \int_0^1 Q(t)Q_{\rho}(1-t)\dt=x.\nn
\end{align}
Moreover, the optimal value of the problem \eqref{q1} is equal to $V_0(x)$, the optimal value of the problem \eqref{eut+}.
\end{proposition}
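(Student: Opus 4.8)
The plan is to apply \cite[Theorem 9]{X14}, which characterizes optimizers of law-invariant functionals over feasible sets defined by budget-type constraints, to the static problem \eqref{static1}. The key structural facts that make this theorem applicable have already been assembled: by \citelem{Jproperties}, $J_0$ is law-invariant and strictly increasing; by \citelem{martingalemethod}, the feasible set $\seta_x$ is sandwiched between $\{X\geq 0:\BE{\rho X}=x\}$ and $\{X\geq 0:\BE{\rho X}\le x\}$, and the strict monotonicity of $J_0$ forces any optimizer of the relaxed problem \eqref{static2} to saturate the budget constraint, hence lie in $\seta_x$; and finally $\rho$ is a genuine (non-degenerate) random variable with $\BE{\rho}<\infty$, so $\setrho$ is non-empty. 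These are precisely the hypotheses under which \cite[Theorem 9]{X14} asserts that maximizing a law-invariant functional subject to $\BE{\rho X}\le x$, $X\ge 0$, is equivalent to maximizing over quantiles subject to $\int_0^1 Q(t)Q_\rho(1-t)\,\dd t\le x$, with the correspondence $X^*=\barq(1-U)$ for $U\in\setrho$.

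First I would recall that $\BE{\rho X}=\int_0^1 Q_X(s)Q_\rho(1-s)\,\dd s$ is \emph{not} an identity in general; rather, by the Hardy--Littlewood inequality one always has $\BE{\rho X}\ge \int_0^1 Q_X(s)Q_\rho(1-s)\,\dd s$, with equality precisely when $X$ and $\rho$ are anti-comonotonic, i.e.\ when $X=Q_X(1-U)$ for some $U\in\setrho$. Since $J_0(X)$ depends on $X$ only through $Q_X$ (by \citelem{Jproperties}), among all $X$ with a prescribed quantile $Q$, the anti-comonotonic one $Q(1-U)$ has the smallest value of $\BE{\rho X}$ and therefore is the cheapest, i.e.\ the least constraining. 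This is the mechanism by which the random-variable problem collapses to the quantile problem: one may restrict attention to anti-comonotonic $X$ without loss, turning the constraint $\BE{\rho X}\le x$ into $\int_0^1 Q(t)Q_\rho(1-t)\,\dd t\le x$. Then the argument of \citelem{unboundedV}'s monotonicity part — or more directly the strict monotonicity of $J_0$ — again upgrades ``$\le x$'' to ``$=x$'' at the optimum, giving \eqref{q1} as stated.

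I would carry out the steps in this order: (i) invoke \citelem{Jproperties} to reduce $J_0(X)$ to $J_0(Q_X)$; (ii) invoke \citelem{martingalemethod} to replace $\seta_x$ by the budget constraint, noting that the relaxed problem \eqref{static2} has the same value and the same optimizers as \eqref{static1}; (iii) apply the Hardy--Littlewood inequality to show that the optimum is attained at an anti-comonotonic pair $X^*=\barq(1-U)$, $U\in\setrho$, reducing the constraint to the quantile form; (iv) apply strict monotonicity of $J_0$ to get equality in the budget; (v) conclude the equivalence with \eqref{q1} and that the optimal values coincide, which was already observed to equal $V_0(x)$ via \citeprop{quantile}'s preamble and \citelem{martingalemethod}. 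The main obstacle is step (iii): one must check that \cite[Theorem 9]{X14} genuinely covers the current setting — in particular that the possibly-discontinuous distribution of $\rho$ (so that $\setrho$ may not be a singleton) is handled, and that no extra integrability or coercivity assumption on $J_0$ is needed beyond law-invariance and strict monotonicity, which is where well-posedness (now a standing assumption via Statement 5 of \citethm{wellposedness}) enters to guarantee the supremum in \eqref{q1} is finite and the cited theorem's hypotheses are met. Everything else is a direct citation or a short monotonicity argument.
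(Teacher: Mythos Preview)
Your proposal is correct and takes essentially the same approach as the paper: the paper simply notes that $J_0$ is law-invariant and strictly increasing by \citelem{Jproperties} and then invokes \cite[Theorem 9]{X14} directly, without further argument. Your outline is a more detailed unpacking of exactly that citation, correctly identifying the Hardy--Littlewood/anti-comonotonicity mechanism and the role of strict monotonicity in saturating the budget constraint.
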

By virtue of this result, solving our original stochastic control problem \eqref{eut+} reduces to solving the quantile optimization problem \eqref{q1}.

\section{Solution}\label{secqfsol}

We now solve \eqref{q1}, which is a constrained concave optimization (or calculus of variations) problem. The first step is to remove the constraint using the Lagrange method.

\begin{lemma}\label{Lagrangemethod}
A quantile $\barq$ is an optimal solution to \eqref{q1} if and only if there exists a constant $\lam>0$ such that $\barq$ satisfies
\begin{align}\label{q2a}
\int_0^1 \barq(t)Q_{\rho}(1-t)\dt=x,
\end{align}
and solves the following optimization problem
\begin{align}\label{q2}
V_{\lam}=\sup\limits_{Q\in\setq} &\;\; J_{\lam}(Q),
\end{align}
where
\begin{align*}
J_{\lam}(Q):=\int_0^1\big[u\big(Q(t)+\game(t)\big) -\lam Q(t)Q_{\rho}(1-t)\big]\dt.
\end{align*}
Moreover,
\[V_0(x)=V_{\lam}+\lam x.\]
\end{lemma}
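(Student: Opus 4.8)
The plan is to run the standard Lagrange/saddle-point argument for a concave maximization problem with a single affine constraint, separating the (easy) sufficiency direction from the (substantive) necessity direction, with the identity $V_0(x)=V_\lam+\lam x$ falling out of either one.

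For sufficiency, suppose $\barq\in\setq$ satisfies \eqref{q2a} and is optimal for \eqref{q2} for some $\lam>0$. For any $Q\in\setq$ feasible for \eqref{q1} I would simply write
\[
J_0(Q)=J_{\lam}(Q)+\lam\!\int_0^1\! Q(t)Q_\rho(1-t)\dt=J_{\lam}(Q)+\lam x\le J_{\lam}(\barq)+\lam x=J_0(\barq),
\]
the last equality using \eqref{q2a} once more. Thus $\barq$ is optimal for \eqref{q1}; since the optimal value of \eqref{q1} equals $V_0(x)$ by \citeprop{quantile} and $J_{\lam}(\barq)=V_\lam$ (as $\barq$ maximizes $J_\lam$), reading the chain with $Q=\barq$ gives $V_0(x)=V_\lam+\lam x$.

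For necessity, suppose $\barq$ solves \eqref{q1}. The equality constraint forces \eqref{q2a}, so the real job is to produce the multiplier. The key step is to pass to the one–dimensional value function: by \citelem{unboundedV} and the standing finiteness assumption, $V_0$ is concave, finite and strictly increasing on $(0,\infty)$, hence admits a finite supergradient at $x$, which may moreover be taken strictly positive (if it were $\le 0$, concavity would make $V_0$ non-increasing on $[x,\infty)$, contradicting strict monotonicity). Fix such $\lam>0$, so $V_0(w)\le V_0(x)+\lam(w-x)$ for all $w>0$. To show $\barq$ is optimal for \eqref{q2}, take any $Q\in\setq$ and put $w:=\int_0^1 Q(t)Q_\rho(1-t)\dt\in[0,\infty]$. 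When $0<w<\infty$, $Q$ is feasible for \eqref{q1} with endowment $w$, so $J_0(Q)\le V_0(w)\le V_0(x)+\lam(w-x)$, which rearranges (the split $J_\lam(Q)=J_0(Q)-\lam w$ being legitimate since both terms are finite) to $J_{\lam}(Q)\le V_0(x)-\lam x=J_{\lam}(\barq)$. The degenerate cases $w=0$ (forcing $Q\equiv0$, handled by letting $w\to0+$ in the supergradient inequality and using $J_0(0)=\BE{u(\pr)}<\infty$ from \citeassmp{boundedclaim}) and $w=\infty$ (handled by truncating $Q$ at level $n$, applying the previous case, and dominating the integrand of $J_\lam$ from above via the concave conjugate of $u$ before passing to the limit) are treated separately; here the lower boundedness of $\pr$ and the finiteness in \citeassmp{boundedclaim} and in Statement~5 of \citethm{wellposedness} are precisely what keep the relevant integrals under control. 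With $\barq$ optimal for \eqref{q2} and satisfying \eqref{q2a}, the identity $V_0(x)=V_\lam+\lam x$ follows exactly as in the sufficiency part.

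I expect the main obstacle to lie entirely in the necessity direction: not so much extracting a supergradient of $V_0$, but (i) verifying it may be chosen strictly positive and (ii) ruling out competitive test quantiles $Q$ of zero or infinite budget cost, where the integrability inputs of \citeassmp{boundedclaim} ($\einf\pr>-\infty$ and $\BE{u(\pr)}<\infty$) and the boundary value $\lim_{w\to\infty}V_0(w)=u(\infty)$ from \citelem{unboundedV} do the real work. Once the multiplier is in hand, the remaining implication and the value identity are routine bookkeeping.
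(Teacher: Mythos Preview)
Your proposal is correct and follows essentially the same approach as the paper: sufficiency by the one–line chain $J_0(Q)=J_\lam(Q)+\lam x\le J_\lam(\barq)+\lam x=J_0(\barq)$, and necessity by extracting a supergradient of the concave, finite, strictly increasing value function $V_0$, then comparing an arbitrary $Q\in\setq$ with $\barq$ via truncation and a limit passage. The paper carries out the latter step uniformly---it sets $Q_n=\min\{Q,n\}$ (so $y_n:=\int Q_n Q_\rho\le n\,\BE{\rho}<\infty$ automatically, which also absorbs your $w=0$ and $w=\infty$ cases at once), uses the inequality $-\lam\int Q_n Q_\rho\ge-\lam\int Q Q_\rho$, and then invokes Fatou's lemma with the lower bound $u(Q_n+\game)\ge u(\game(0))$ to obtain $J_\lam(\barq)\ge J_\lam(Q)$. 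This is a bit cleaner than your suggested route via a concave-conjugate upper bound and dominated convergence; in fact only monotone convergence/Fatou is needed, so you can simplify that portion of your argument.
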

\begin{proof}
$\Longrightarrow:$ Because $V_{0}$ is concave, for each given $x>0$, we have
\begin{align}\label{lag1}
V_{0}(y)-\lam y\leq V_{0}(x)-\lam x,\quad y>0,
\end{align}
where
\[\lam=\liminf_{\Delta\to 0+}\frac{V_{0}(x+\Delta)-V_{0}(x)}{\Delta}\geq 0,\]
due to the monotonicity of $V_0$.
By the concavity and finiteness of $V_{0}$,
\[\lam=\liminf_{\Delta\to 0+}\frac{V_{0}(x+\Delta)-V_{0}(x)}{\Delta}\leq \frac{V_{0}(x)-V_{0}(x/2)}{x-x/2}<\infty.\]
If $\lam=0$, then \eqref{lag1} would indicate that $x$ is a global maximizer of $V_{0}$, contradicting the strictly monotonicity of $V_{0}$ from \citelem{unboundedV}.
So we conclude $0<\lam<\infty$.

Suppose $\barq$ is an optimal solution to \eqref{q1}. Then it clearly satisfies \eqref{q2a}.
Also,
\begin{align}\label{eq1}
V_{0}(x)=J_{0}(\barq)=J_{\lam}(\barq)+\lam\int_0^1\barq(t)Q_{\rho}(1-t)\dt=J_{\lam}(\barq)+\lam x.
\end{align}
For any $Q\in\setq$, let
\[Q_n=\min\{Q,n\},\]
and
\[ y_n=\int_0^1Q_n(t)Q_{\rho}(1-t)\dt\leq \int_0^1nQ_{\rho}(1-t)\dt=n\BE{\rho}<\infty. \]
Then, by definition,
\begin{align*}
J_{\lam}(Q_n)&=J_{0}(Q_n)-\lam \int_0^1Q_n(t)Q_{\rho}(1-t)\dt\\
&=J_{0}(Q_n)-\lam y_n\\
&\leq V_{0}(y_n)-\lam y_n\leq V_{0}(x)-\lam x=J_{\lam}(\barq)
\end{align*}
in view of \eqref{lag1} and \eqref{eq1}.
It hence follows
\begin{align*}
J_{\lam}(\barq)\geq J_{\lam}(Q_n)\geq \int_0^1 u\big(Q_n(t)+\game(t)\big)\dt -\lam \int_0^1 Q(t)Q_{\rho}(1-t)\dt.
\end{align*}
Because $$u\big(Q_n(t)+\game(t)\big)\geq u\big(\game(t)\big)\geq u\big(\game(0)\big),$$
Fatou's lemma yields
\begin{align*}
J_{\lam}(\barq) &\geq \liminf_{n\to\infty} \int_0^1 u\big(Q_n(t)+\game(t)\big)\dt -\lam \int_0^1 Q(t)Q_{\rho}(1-t)\dt\\
&\geq\int_0^1 u\big(Q(t)+\game(t)\big)\dt -\lam \int_0^1 Q(t)Q_{\rho}(1-t)\dt
=J_{\lam}(Q).
\end{align*}
This indicates that $\barq$ is an optimal solution to \eqref{q2}.

$\Longleftarrow:$ Suppose there exists $\lam>0$ such that $\barq$ satisfies \eqref{q2a} and is an optimal solution to \eqref{q2}.
Then, for any feasible solution $Q\in\setq$ to \eqref{q1},
\begin{align*}
J_{0}(Q)=J_{\lam}(Q)+\lam\int_0^1 Q(t)Q_{\rho}(1-t)\dt&=J_{\lam}(Q)+\lam x\\
&\leq J_{\lam}(\barq)+\lam x=J_{\lam}(\barq)+\lam\int_0^1\barq(t)Q_{\rho}(1-t)\dt=J_{0}(\barq).
\end{align*}
Because $\barq$ is a feasible solution to \eqref{q1}, the above inequality shows that it is optimal to \eqref{q1}.
\qed
\end{proof}

Our problem now reduces to first solving the unconstrained optimization \eqref{q2}, and then finding $\lam>0$ such that the resulting optimal solution satisfies \eqref{q2a}.
The quantile optimization \eqref{q2} is a concave optimization problem, which can be tackled by calculus of variations following Xu \cite{X21,X23}.

\subsection{Well-posedness of \eqref{q2}}

Before solving \eqref{q2}, we need to first address its well-posedness.

\begin{lemma}\label{finiteoptimal1}
The problem \eqref{q2} is well-posed if and only if
\begin{align} \label{finitecon1}
\BE{u\big((u')^{-1}(\lam\rho)\big)-\lam \rho (u')^{-1}(\lam\rho)}<\infty.
\end{align}
\end{lemma}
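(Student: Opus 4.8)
The plan is to show the two directions separately, with the pointwise ($\omega$-wise, or here $t$-wise) maximization of the integrand being the central device. Write the integrand of $J_\lam(Q)$ as $g_t(q):=u(q+\game(t))-\lam q\, Q_\rho(1-t)$ for $q\ge 0$, so that $J_\lam(Q)=\int_0^1 g_t(Q(t))\dt$ over $Q\in\setq$. The first step is to relax the problem by dropping the monotonicity constraint in $\setq$ and simply maximizing $\int_0^1 g_t(Q(t))\dt$ over all nonnegative measurable $Q$; this relaxed value is an upper bound for $V_\lam$. For each fixed $t$, the concave function $q\mapsto g_t(q)$ on $[0,\infty)$ is maximized (ignoring the constraint $q\ge 0$) at the point where $u'(q+\game(t))=\lam Q_\rho(1-t)$, i.e. at $q^\star(t)=(u')^{-1}(\lam Q_\rho(1-t))-\game(t)$, and the constrained maximizer is $\big(q^\star(t)\big)^+$. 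Substituting and using $\rho=Q_\rho(1-U)$ for $U\in\setrho$ (so that $\int_0^1 h(Q_\rho(1-t))\dt=\BE{h(\rho)}$ for any measurable $h\ge 0$), the relaxed optimal value is
\[
\int_0^1 \Big[u\big(\max\{(u')^{-1}(\lam Q_\rho(1-t)),\game(t)\}\big)-\lam Q_\rho(1-t)\big((u')^{-1}(\lam Q_\rho(1-t))-\game(t)\big)^+\Big]\dt.
\]

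For the ``if'' direction, assume \eqref{finitecon1} holds. I would bound the displayed relaxed value from above by splitting the integral according to whether $(u')^{-1}(\lam Q_\rho(1-t))\ge \game(t)$ or not. On the set where $(u')^{-1}(\lam Q_\rho(1-t))\ge\game(t)$ the integrand equals $u\big((u')^{-1}(\lam Q_\rho(1-t))\big)-\lam Q_\rho(1-t)(u')^{-1}(\lam Q_\rho(1-t))+\lam Q_\rho(1-t)\game(t)$; the first two terms integrate to the finite quantity in \eqref{finitecon1}, and the last contributes $\lam\int_0^1 Q_\rho(1-t)\game(t)\dt\le \lam\,\BE{\rho}\,|\game(0)|\cdot(\dots)$ — more carefully, since $\game$ is increasing and integrable against $Q_\rho(1-\cdot)$ is controlled because $\game(t)\le \game(1)$ may be infinite, so instead I would use $u(q+\game(t))\le u(q)+u(\game(t))-u(\min\{0,\game(0)\})$ (the subadditivity-type bound \eqref{uineq}, valid by concavity) to reduce everything to $\BE{u((u')^{-1}(\lam\rho))}$, $\BE{\rho(u')^{-1}(\lam\rho)}$, and $\int_0^1 u(\game(t))\dt=\BE{u(\pr)}<\infty$ from \citeassmp{boundedclaim}. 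On the complementary set the integrand is just $u(\game(t))$, again integrable. Hence $V_\lam\le$ relaxed value $<\infty$; since $Q\equiv 0\in\setq$ gives $J_\lam(0)=\int_0^1 u(\game(t))\dt>-\infty$, the problem \eqref{q2} is well-posed.

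For the ``only if'' direction, assume \eqref{finitecon1} fails, i.e. $\BE{u((u')^{-1}(\lam\rho))-\lam\rho(u')^{-1}(\lam\rho)}=\infty$. I would exhibit a sequence $Q_n\in\setq$ with $J_\lam(Q_n)\to\infty$. The natural candidate is the truncation $Q_n(t):=\min\big\{\big((u')^{-1}(\lam Q_\rho(1-t))-\game(t)\big)^+,\,n\big\}$, which is nonnegative; the subtle point is that $(u')^{-1}(\lam Q_\rho(1-\cdot))$ is increasing in $t$ while $\game$ is also increasing, so $Q_n$ need not be monotone. This is the main obstacle. To handle it I would instead use the comonotone rearrangement: replace the pointwise maximizer by its increasing rearrangement, or — more simply — observe that the objective $J_\lam$ restricted to $\setq$ still dominates $J_\lam$ evaluated at the increasing function $\widehat Q_n(t):=\min\{(u')^{-1}(\lam Q_\rho(1-t)),n\}$ (which \emph{is} in $\setq$ since $(u')^{-1}(\lam Q_\rho(1-\cdot))$ is increasing and nonnegative), and then bound $J_\lam(\widehat Q_n)\ge \int_0^1\big[u\big((u')^{-1}(\lam Q_\rho(1-t))\wedge n + \game(0)\big)-\lam Q_\rho(1-t)\big((u')^{-1}(\lam Q_\rho(1-t))\wedge n\big)\big]\dt$ using $\game(t)\ge\game(0)$ in the first term and $\game(t)\ge 0$ only where helpful; splitting off the part where $\game(0)<0$ via the Lipschitz-type estimate $u(q+\game(0))\ge u(q)+u'(1)\game(0)$ for $q\ge 1$ (exactly as in the proof of \citethm{wellposedness}, step $4\Rightarrow 1$) reduces the lower bound to $\BE{u\big((u')^{-1}(\lam\rho)\wedge n\big)-\lam\rho\big((u')^{-1}(\lam\rho)\wedge n\big)}$ minus finite constants. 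By monotone convergence this tends to $\BE{u((u')^{-1}(\lam\rho))-\lam\rho(u')^{-1}(\lam\rho)}=\infty$, so $V_\lam=\infty$ and \eqref{q2} is ill-posed. Taken together the two directions establish the equivalence.
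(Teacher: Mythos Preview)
Your argument is essentially correct, and it relies on the same two concavity estimates that the paper uses---the subadditivity bound \eqref{uineq} and the Lipschitz-type lower bound from step $4\Rightarrow 1$ of \citethm{wellposedness}. The difference is one of organization. The paper applies these estimates \emph{first}, as the uniform two-sided bound
\[
-C_1\;\leq\; J_\lam(Q)-\int_0^1\big[u(Q(t))-\lam Q(t)Q_\rho(1-t)\big]\dt\;\leq\; C_2,\qquad \forall\,Q\in\setq,
\]
(this is exactly \eqref{v0leqv}--\eqref{v0geqv} with the $-\lam\int QQ_\rho$ term carried along), and only \emph{then} maximizes the now claim-free integrand pointwise. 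The advantage is that for the claim-free problem the pointwise maximizer $t\mapsto (u')^{-1}(\lam Q_\rho(1-t))$ is automatically increasing, so it lies in $\setq$ and no relaxation or rearrangement is needed; both directions of the equivalence follow immediately from \eqref{diff1}--\eqref{diff2}. Your route---maximize first, then apply the estimates---forces you to confront the non-monotonicity of $(u')^{-1}(\lam Q_\rho(1-\cdot))-\game(\cdot)$, which you then sidestep by testing against $\widehat Q_n$ and invoking the same estimates piecemeal. Both work, but the paper's ordering avoids the detour entirely.

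One small point to tidy: when $\game(0)<0$ the domain of $u'$ extends to $(\game(0),\infty)$, so $(u')^{-1}(\lam Q_\rho(1-t))$ can take values in $(\game(0),0)$ for small $t$, and then your $\widehat Q_n$ is not in $\setq$. You should replace it by its positive part; the region where the truncation at $0$ is active contributes only a bounded amount (since $|(u')^{-1}(\lam Q_\rho(1-t))|\leq|\game(0)|$ there and $\int_0^1 Q_\rho(1-t)\dt=\BE{\rho}<\infty$), so the monotone-convergence argument goes through unchanged.
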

\begin{proof}
By the estimates \eqref{v0leqv} and \eqref{v0geqv}, we have
\begin{align*}
-C_1\leq J_0(X)-\BE{u(X)} \leq C_2,\;\; \forall\; X\in L^{0}_{\BF_{T}},
\end{align*}
where
\begin{align*}
C_1&:=|u(1+|\game(0)|)|+u'(1)|\game(0)|+|u(\game(0))|,\\
C_2&:=\BE{u(\max\{\pr,0\})}-u(0).
\end{align*}
Hence,
\begin{align*}
-C_1\leq J_{\lam}(Q) - \int_0^1\big[u\big(Q(t)\big) -\lam Q(t)Q_{\rho}(1-t)\big]\dt\leq C_2,\;\; \forall\; Q\in \setq.
\end{align*}
Maximizing both sides over $Q\in \setq$ yields
\begin{align}
-C_1\leq V_{\lam} - \sup_{Q\in \setq}\;\int_0^1\big[u\big(Q(t)\big) -\lam Q(t)Q_{\rho}(1-t)\big]\dt\leq C_2. \label{diff1}
\end{align}
Let \[Q_0(t):=\lim_{s\to t+}(u')^{-1}(\lam Q_{\rho}(1-s)).\] Then $Q_0\in\setq$. Applying the first-order condition to the maximum of the function $u\big(x\big) -\lam xQ_{\rho}(1-t)$ in $x$, we have
\begin{align*}
u\big(Q(t)\big) -\lam Q(t)Q_{\rho}(1-t) &\leq u\big((u')^{-1}(\lam Q_{\rho}(1-t))\big) -\lam (u')^{-1}(\lam Q_{\rho}(1-t)) Q_{\rho}(1-t).
\end{align*}
The right hand side is, for a.e. $t\in(0,1)$, equal to
\begin{align*}
u\big(Q_0(t)\big) -\lam Q_0(t)Q_{\rho}(1-t).
\end{align*}
Hence, we conclude that
\begin{align*}
\sup_{Q\in \setq}\;\int_0^1\big[u\big(Q(t)\big) -\lam Q(t)Q_{\rho}(1-t)\big]\dt&=\int_0^1\big[u\big(Q_0(t)\big) -\lam Q_0(t)Q_{\rho}(1-t)\big]\dt.
\end{align*}
For $U\in\setrho$, we have $Q_{\rho}(U)=\rho$ and
\[ Q_0(1-U)=(u')^{-1}\big(\lam Q_{\rho}(U)\big)=(u')^{-1}(\lam\rho), \quad\mbox{a.s.},\]
so
\begin{align} \label{diff2}
\sup_{Q\in \setq}\;\int_0^1\big[u\big(Q(t)\big) -\lam Q(t)Q_{\rho}(1-t)\big]\dt
&=\BE{u\big((u')^{-1}(\lam\rho)\big)-\lam \rho (u')^{-1}(\lam\rho)}.
\end{align}
This together with \eqref{diff1} implies that the optimal value of the problem \eqref{q2} is finite if and only if the condition \eqref{finitecon1} holds. The proof is complete.
\qed
\end{proof}
\begin{remark}
Because
\[\frac{\partial}{\partial\lam}\BE{u\big((u')^{-1}(\lam\rho)\big)-\lam \rho (u')^{-1}(\lam\rho)}=-\BE{ \rho (u')^{-1}(\lam\rho)}<0,\]
the condition \eqref{finitecon1} holds if and only if either $\lam>\llam_1$ or $\lam\geq \llam_1$,
where
\[\llam_1=\inf\Big\{\lam>0:\BE{u\big((u')^{-1}(\lam\rho)\big)-\lam \rho (u')^{-1}(\lam\rho)}<\infty\Big\}.\]
\end{remark}

\subsection{Characterization of solution}
We now investigate the problem \eqref{q2}. In view of \citelem{finiteoptimal1}, we only need to study the case under the well-posedness condition \eqref{finitecon1},
which is henceforth assumed.

Because $u$ is strictly concave, \eqref{q2} admits at most one optimal solution. The following result characterizes the unique optimal solution, if it exists.
\begin{lemma}[Optimality condition]\label{op1}
A quintile $\barq\in\setq$ is the unique optimal solution to \eqref{q2} if and only if it satisfies
\begin{align} \label{opc}
\int_{0}^{1}\big[u'\big(\barq(t)+\game(t)\big)-\lam Q_{\rho}(1-t)\big]\big(Q(t)-\barq(t)\big)\dt\leq 0, \quad
\forall\; Q\in\setq.
\end{align}
\end{lemma}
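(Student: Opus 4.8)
The plan is to exploit that \eqref{q2} is the maximization of a concave functional over the convex set $\setq$, so that optimality is equivalent to the first-order variational inequality, i.e.\ that the Gateaux derivative of $J_\lam$ at $\barq$ in every admissible direction is non-positive. Concretely, for a candidate $\barq\in\setq$ and any competitor $Q\in\setq$, I would consider the path $Q_\epsilon:=\barq+\epsilon(Q-\barq)$, which lies in $\setq$ for $\epsilon\in[0,1]$ by convexity of $\setq$, and analyze
\[
\frac{J_\lam(Q_\epsilon)-J_\lam(\barq)}{\epsilon}
=\int_0^1\frac{u\big(\barq(t)+\epsilon(Q(t)-\barq(t))+\game(t)\big)-u\big(\barq(t)+\game(t)\big)}{\epsilon}\dt
-\lam\int_0^1\big(Q(t)-\barq(t)\big)Q_\rho(1-t)\dt .
\]
As $\epsilon\downarrow 0$ the integrand in the first term decreases (by concavity of $u$, difference quotients are decreasing in the step size) to the pointwise derivative $u'(\barq(t)+\game(t))(Q(t)-\barq(t))$, so monotone convergence (after splitting into the parts where $Q(t)\ge\barq(t)$ and $Q(t)<\barq(t)$, handling each with the monotone convergence theorem) gives
\[
\lim_{\epsilon\downarrow0}\frac{J_\lam(Q_\epsilon)-J_\lam(\barq)}{\epsilon}
=\int_0^1\big[u'\big(\barq(t)+\game(t)\big)-\lam Q_\rho(1-t)\big]\big(Q(t)-\barq(t)\big)\dt .
\]

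For the ``only if'' direction: if $\barq$ is optimal then $J_\lam(Q_\epsilon)\le J_\lam(\barq)$ for all $\epsilon\in(0,1]$, so each difference quotient is $\le 0$, hence the limit is $\le 0$, which is exactly \eqref{opc}. For the ``if'' direction: given \eqref{opc}, I want to conclude $J_\lam(Q)\le J_\lam(\barq)$ for every $Q\in\setq$. This follows from concavity of $J_\lam$: for a concave functional, $J_\lam(Q)-J_\lam(\barq)\le \lim_{\epsilon\downarrow0}\big(J_\lam(\barq+\epsilon(Q-\barq))-J_\lam(\barq)\big)/\epsilon$ because the difference quotient $\big(J_\lam(\barq+\epsilon(Q-\barq))-J_\lam(\barq)\big)/\epsilon$ is decreasing in $\epsilon$ on $(0,1]$ and equals $J_\lam(Q)-J_\lam(\barq)$ at $\epsilon=1$; the right-hand side is the Gateaux derivative, which by the computation above equals the left side of \eqref{opc} and is thus $\le 0$. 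Uniqueness is immediate from strict concavity of $u$ (hence strict concavity of $J_\lam$ in the relevant sense), which the excerpt already records, so two distinct optima would contradict each other via the strict inequality in the first-order condition.

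The main obstacle I anticipate is the interchange of limit and integral in computing the Gateaux derivative, because $u'$ may blow up near the left endpoint of its domain and $Q_\rho(1-t)$ may be unbounded as $t\downarrow 0$, so neither dominated convergence with a fixed integrable dominator nor naive monotone convergence applies directly to the signed integrand. The fix is to split the domain $(0,1)$ according to the sign of $Q(t)-\barq(t)$: on $\{Q\ge\barq\}$ the difference quotients $\big(u(\barq+\epsilon(Q-\barq)+\game)-u(\barq+\game)\big)/\epsilon$ are nonnegative and decreasing in $\epsilon$, so the monotone convergence theorem applies as $\epsilon\downarrow0$; on $\{Q<\barq\}$ they are nonpositive and increasing to the limit (equivalently, their absolute values decrease), and one must separately check that for some fixed $\epsilon_0\in(0,1]$ the quantity $\int_{\{Q<\barq\}}\big|u(\barq+\epsilon_0(Q-\barq)+\game)-u(\barq+\game)\big|\dt$ is finite — which holds because both $\barq+\epsilon_0(Q-\barq)+\game$ and $\barq+\game$ are quantile-type functions bounded below by $\game(0)$ and the relevant $J_\lam$ and $J_0$ values are finite under the standing well-posedness assumption and Assumption~\ref{boundedclaim} — so that the monotone convergence theorem (applied to the decreasing nonnegative sequence of absolute differences, with the $\epsilon_0$ term as the finite majorant) again delivers the limit. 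The $\lam Q_\rho(1-t)$ term causes no trouble once one notes $\int_0^1 Q(t)Q_\rho(1-t)\dt$ and $\int_0^1\barq(t)Q_\rho(1-t)\dt$ are finite (both $Q$ and $\barq$ being feasible or near-feasible), making that part of the integrand absolutely integrable. With these integrability bookkeeping steps in place the rest is the standard convex-analytic equivalence between optimality and the first-order variational inequality.
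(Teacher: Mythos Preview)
Your proposal is correct and follows essentially the same route as the paper: both establish the first-order variational inequality by perturbing along $Q_\epsilon=\barq+\epsilon(Q-\barq)$ and exploiting concavity. The paper's execution is slightly leaner in two places: for the ``only if'' direction it invokes Fatou's lemma directly (bypassing your sign-splitting monotone-convergence bookkeeping), and for the ``if'' direction it simply applies the pointwise tangent inequality $u(y)-u(x)\le u'(x)(y-x)$ and integrates, which avoids recomputing the Gateaux derivative and does not require $\int_0^1 Q(t)Q_\rho(1-t)\dt<\infty$ for the competitor $Q$.
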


\begin{proof}
Suppose $\barq$ is the unique optimal solution to \eqref{q2}. For any $Q\in\setq$, $\ep\in(0,1)$, define
\[Q_{\ep}(t)=\barq(t)+\ep \big(Q(t)-\barq(t)\big),\quad t\in(0,1).\]
Then $Q_{\ep}\in\setq$. Recalling that $\barq$ is optimal and applying Fatou's lemma, we get
\begin{align*}
0 &\geq \liminf_{\ep\to 0+}\frac{1}{\ep}\bigg[\int_{0}^{1}u(Q_{\ep}(t)+\game(t)) -\lam Q_{\ep}(t)Q_{\rho}(1-t)\dt\\
&\qquad\qquad\qquad-\int_{0}^{1}u(\barq(t)+\game(t)) -\lam \barq(t)Q_{\rho}(1-t)\dt\bigg]\\
&\geq \int_{0}^{1}\big[u'\big(\barq(t)+\game(t)\big)-\lam Q_{\rho}(1-t)\big]\big(Q(t)-\barq(t)\big)\dt,
\end{align*}
leading to \eqref{opc}.
\par
Conversely, suppose $\barq\in\setq$ satisfies \eqref{opc}.
Because $u$ is concave, we have the elementary inequality $u(y)-u(x)\leq u'(x)(y-x)$ for any $x$, $y\in\R$. It follows that
\begin{multline*}
\qquad\; u\big(Q(t)+\game(t)\big) -\lam Q(t)Q_{\rho}(1-t)-\big[u(\barq(t)+\game(t)) -\lam \barq(t)Q_{\rho}(1-t)\big]\\
\leq \big[u'\big(\barq(t)+\game(t)\big)-\lam Q_{\rho}(1-t)\big]\big(Q(t)-\barq(t)\big),\quad \forall\; Q\in\setq.\qquad\qquad
\end{multline*}
Integrating both sides in above and using \eqref{opc}, we conclude $\barq$ is optimal to \eqref{q2}.
\qed
\end{proof}

Before proceeding further, we impose the following technical assumption in the rest of this paper.
\begin{assumption}\label{quantiles1}
The quantiles of $\pr$ and ${\rho}$ satisfy
\[\lim_{t\to 0+}\frac{u'(\game(t))}{Q_{\rho}(1-t)}=0.\]
\end{assumption}
This assumption is satisfied if $\rho$ is unbounded and $u'(\game(0))<\infty$.
Moreover, this assumption together with $\BE{\rho}<\infty$ implies
\begin{align} \label{integral1}
\BE{u'(\pr)}=\int_0^1 u'(\game(t))\dt<\infty.
\end{align}

\begin{corollary}\label{initial}
The optimal solution $\barq$ to the problem \eqref{q2} must satisfy $\barq(0)=0$.
\end{corollary}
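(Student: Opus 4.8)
The plan is to argue by contradiction using the optimality condition \eqref{opc}. Since $\barq\in\setq$ is increasing, $\barq(0):=\lim_{t\to 0+}\barq(t)$ automatically lies in $[0,\infty)$, so it suffices to rule out $\barq(0)=c$ with $c>0$. Assuming $c>0$, we have $\barq(t)\geq c$ for every $t\in(0,1)$, so shifting $\barq$ down by a small amount near the origin keeps it in $\setq$. Concretely, I would fix $\ep\in(0,c)$ and $t_0\in(0,1)$ and take as competitor $Q:=\barq-\ep\,\idd{(0,t_0)}$. This $Q$ is RCI and nonnegative (indeed $Q\geq\barq-\ep\geq c-\ep>0$ on $(0,t_0)$ and $Q=\barq\geq0$ on $[t_0,1)$), hence $Q\in\setq$.

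Next I would substitute $Q$ into \eqref{opc}. Since $Q-\barq=-\ep\,\idd{(0,t_0)}$ and all the integrals below are finite ($u'$ is bounded on $[c+\game(0),\infty)$, and $\int_0^1 Q_\rho(1-t)\dt=\BE{\rho}<\infty$), dividing \eqref{opc} by $-\ep<0$ leaves
\[
\int_0^{t_0}u'\big(\barq(t)+\game(t)\big)\dt\;\geq\;\lam\int_0^{t_0}Q_\rho(1-t)\dt .
\]
I then estimate the two sides in opposite directions. On the left, monotonicity of $u'$ together with $\barq(t)+\game(t)\geq\barq(0)+\game(0)=c+\game(0)$ (both $\barq$ and $\game$ being increasing) gives $u'(\barq(t)+\game(t))\leq u'(c+\game(0))<\infty$, so the left side is at most $t_0\,u'(c+\game(0))$; the finiteness of $u'(c+\game(0))$ holds because $c+\game(0)>\min\{0,\game(0)\}$ lies in the open interval on which $u$ is differentiable. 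On the right, $Q_\rho$ being increasing gives $Q_\rho(1-t)\geq Q_\rho(1-t_0)$ for $t\in(0,t_0)$, so the right side is at least $\lam t_0 Q_\rho(1-t_0)$. Cancelling $t_0$, I obtain
\[
u'\big(c+\game(0)\big)\;\geq\;\lam\,Q_\rho(1-t_0)\qquad\text{for every }t_0\in(0,1).
\]

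The concluding step is to let $t_0\to 0+$. Here the decisive observation is that \citeassmp{quantiles1} forces $\lim_{t\to 0+}Q_\rho(1-t)=\esup\rho=+\infty$: since $u'$ is strictly positive throughout its domain and $\game(t)$ stays bounded above for $t$ near $0$ (say by $\game(1/2)$), $u'(\game(t))$ is bounded below by a positive constant near $0$, so $u'(\game(t))/Q_\rho(1-t)\to 0$ is only possible if $Q_\rho(1-t)\to\infty$. Letting $t_0\to 0+$ in the last display then sends the right-hand side to $+\infty$ while the left-hand side is the fixed finite number $u'(c+\game(0))$, a contradiction; hence $c=0$, i.e. $\barq(0)=0$. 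I expect the one non-routine ingredient to be this recognition that \citeassmp{quantiles1} is precisely what makes $Q_\rho$ blow up at $1$ — the mechanism that penalizes carrying a strictly positive ``floor'' $c$ across the most expensive (large-$\rho$) states — while verifying $Q\in\setq$ and the finiteness of all integrals is routine bookkeeping.
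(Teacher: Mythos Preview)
Your proof is correct and follows essentially the same approach as the paper: both arguments plug a competitor that lowers $\barq$ near the origin into the optimality condition \eqref{opc} and invoke \citeassmp{quantiles1} to force a contradiction. The only cosmetic differences are that the paper uses the competitor $Q=\barq\,\idd{[\ep,1)}$ (zeroing out rather than shifting down) and reads off directly from \citeassmp{quantiles1} that $u'(\barq(t)+\game(t))-\lam Q_\rho(1-t)<0$ for small $t$, rather than bounding the two integrals separately and passing to the limit.
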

\begin{proof}
Because $\barq\geq 0$ and $u$ is concave, \citeassmp{quantiles1} implies
\[u'\big(\barq(t)+\game(t)\big)-\lam Q_{\rho}(1-t)\leq u'\big(\game(t)\big)-\lam Q_{\rho}(1-t)<0, \; \mbox{ for $t\in(0, \ep)$,}\] with $\ep>0$ being sufficiently small.
Suppose $\barq(0)>0$, then we may assume $\barq(t)>0$ for $t\in(0,\ep)$ by the right-continuity of $\barq$.
Now let $Q(t)=\barq(t)\idd{t\geq \ep}$. Then $Q\in\setq$ and
\begin{align*}
\int_{0}^{1}\big[u'\big(\barq(t)+\game(t)\big)-\lam Q_{\rho}(1-t)\big]\big(Q(t)-\barq(t)\big)\dt> 0,
\end{align*}
which contradicts \eqref{opc}. The proof is complete.
\qed
\end{proof}

\par
It is hard to use the condition \eqref{opc} to find the optimal solution to the problem \eqref{q2} because one would have to compare $\barq$ with all the other quantiles in $\setq$, a task as difficult as solving \eqref{q2}.
Our next step is to find an equivalent condition to \eqref{opc} that can be easily verified and utilized. To this end, let
\begin{align*}
H(t):=\int_t^1 \big[\lam Q_{\rho}(1-s) -u'\big(\barq(s)+\game(s)\big)\big]\ds.
\end{align*}
Using $\BE{\rho}<\infty$, \eqref{integral1} and the monotonicity of $u'$, one can easily show that $H$ is a continuous function on $[0,1]$.
In terms of $H$, the inequality in \eqref{opc} now reads
\begin{align*}
\int_{0}^{1}H'(t) \big(Q(t)-\barq(t)\big)\dt\leq 0.
\end{align*}
By taking $Q=2\barq$ and $Q=\frac{1}{2}\barq$ in above, we obtain the following two conditions
\begin{align} \label{opc1-2b}
\int_{0}^{1}H'(t) \barq(t)\dt=0,
\end{align}
and
\begin{align} \label{opc1-2c}
\int_{0}^{1}H'(t) Q(t)\dt\leq 0.
\end{align}

Suppose $H(a)<0$ for some $a\in [0,1)$. Let $Q(t)= \idd{t\in [a, 1)}$, then $Q\in\setq$ and
\begin{align*}
\int_{0}^{1}H'(t) Q(t)\dt=-H(a)>0,
\end{align*}
contradicting \eqref{opc1-2c}. Hence, $H\geq 0$ on $[0,1]$ by continuity.

Suppose $H>0$ on an interval $(a,b]\subset (0,1)$ and $\barq(b)>\barq(a)$. Then since $H\geq 0$, $H(1)=0$ and by integration by parts, we have
\begin{align*}
\int_{0}^{1}H'(t) \barq(t)\dt=H(0)\barq(0)+\int_{(0,1)} H(t)\dd \barq(t)
\geq \int_{(a,b]} H(t)\dd \barq(t)>0,
\end{align*}
contradicting \eqref{opc1-2b}. This implies that $\barq$ is constant on every subinterval of $\big\{t\in(0,1)\;\big|\;H(t)>0\big\}$, and consequently
$\barq'=0$ on $\big\{t\in(0,1)\;\big|\;H(t)>0\big\}$.

Together with \citecoro{initial}, we conclude that
if $\barq$ is the optimal solution to the problem \eqref{q2}, then
\begin{align} \label{opc2}
\begin{cases}
\dsp \min\Big\{\barq'(t), \; \int_t^1 \big[\lam Q_{\rho}(1-s)-u'\big(\barq(s)+\game(s)\big)\big]\ds\Big\}=0,\; \mbox{ for $t\in(0,1)$;}\medskip\\
\barq(0)=0.
\end{cases}
\end{align}
This is a so-called ordinary integro--differential equation (OIDE) with boundary condition for $\barq$.
It is easy to check that \eqref{opc2} implies \eqref{opc1-2b} and \eqref{opc1-2c}, which is equivalent to \eqref{opc}.
Therefore, it follows from \citelem{op1} that the solution to \eqref{opc2} is indeed the optimal solution to \eqref{q2}.
Clearly, in comparison to \eqref{opc}, the condition \eqref{opc2} is easier to verify and use because it only depends on $\barq$ itself.

Summarizing the preceding analysis we present the following main result.
\begin{theorem}[Characterization of optimal solution I]\label{main1}
A function $\barq$ is the unique optimal solution to the problem \eqref{q2} if and only if it is a RCI function on $(0,1)$ that satisfies the OIDE \eqref{opc2}.
\end{theorem}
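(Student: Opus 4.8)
The statement collects and repackages the string of observations established immediately before it, so the plan is to organize them into the two halves of the ``if and only if'', with \citelem{op1} serving as the hinge on both sides. Recall the auxiliary function $H(t)=\int_t^1\big[\lam Q_\rho(1-s)-u'\big(\barq(s)+\game(s)\big)\big]\ds$; I would first note that $H$ is well defined and continuous on $[0,1]$, using $\BE{\rho}<\infty$, the integrability $\BE{u'(\pr)}<\infty$ from \eqref{integral1} (itself a consequence of \citeassmp{quantiles1}), and the monotonicity of $u'$. Then $H'(t)=u'\big(\barq(t)+\game(t)\big)-\lam Q_\rho(1-t)$ for a.e.\ $t$, so the variational inequality \eqref{opc} is exactly $\int_0^1 H'(t)\big(Q(t)-\barq(t)\big)\dt\le 0$ for every $Q\in\setq$.

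For necessity, suppose $\barq$ is optimal for \eqref{q2}. By \citelem{op1} it belongs to $\setq$, hence is RCI, and it satisfies \eqref{opc}. Since $\setq$ is a cone, testing \eqref{opc} with $Q=2\barq$ and $Q=\tfrac12\barq$ forces the scalar identity \eqref{opc1-2b}, after which \eqref{opc} for a general $Q$ reduces to the inequality \eqref{opc1-2c}; feeding the indicators $Q=\idd{t\in[a,1)}$ into \eqref{opc1-2c} gives $-H(a)\le 0$ for each $a\in[0,1)$, hence $H\ge 0$ on $[0,1]$ by continuity. Next, integrating \eqref{opc1-2b} by parts against the nonnegative Stieltjes measure $\dd\barq$, with the boundary term at $0$ killed by $\barq(0)=0$ (\citecoro{initial}) and the one at $1$ killed by $H(1)=0$, reduces it to $\int_{(0,1)}H(t)\,\dd\barq(t)=0$; since $H\ge 0$, this forces $\barq$ to be constant on every connected component of the open set $\{H>0\}$, whence $\barq'=0$ a.e.\ there. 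Assembling $\barq'=0$ on $\{H>0\}$, $\barq'\ge 0$ and $H\ge 0$ throughout $(0,1)$, and $\barq(0)=0$, one reads off precisely the OIDE \eqref{opc2}.

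For sufficiency, let $\barq$ be an RCI function satisfying \eqref{opc2}. Being RCI with $\barq(0)=0$, $\barq$ is nonnegative on $(0,1)$, so $\barq\in\setq$. Running the previous paragraph in reverse, the complementarity $\min\{\barq'(t),H(t)\}=0$ gives, through the same integration-by-parts identity and using $H\ge 0$, $\dd Q\ge 0$ and the vanishing boundary terms, both \eqref{opc1-2b} and \eqref{opc1-2c}; since subtracting these is exactly \eqref{opc}, \citelem{op1} certifies that $\barq$ is the unique optimal solution to \eqref{q2}.

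The step I expect to be the real work is the measure-theoretic bookkeeping in the integration by parts. Members of $\setq$ are only RCI, so $\barq$ may carry jumps, a singular part, and an infinite limit $\barq(1-)$; one has to verify that $\int_{(0,1)}H\,\dd\barq$ makes sense, that the boundary contributions are genuinely controlled (the one at $0$ via \citecoro{initial}, the one at $1$ via $H(1)=0$ together with the finiteness of $\int_0^1 H'(t)\barq(t)\,\dt$, which is where the well-posedness condition \eqref{finitecon1} enters), and that ``$\barq$ constant on each component of $\{H>0\}$'' is genuinely interchangeable with ``$\barq'=0$ a.e.\ on $\{H>0\}$'', i.e.\ that phrasing the OIDE through the a.e.\ derivative loses nothing --- which rests on the openness of $\{H>0\}$ to exclude any mass of $\dd\barq$ inside it.
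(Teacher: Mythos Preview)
Your proposal is correct and follows the same route as the paper, which proves the theorem by summarizing the chain \citelem{op1} $\Rightarrow$ \eqref{opc1-2b}--\eqref{opc1-2c} $\Rightarrow$ $H\ge 0$ and $\barq$ flat on $\{H>0\}$ $\Rightarrow$ \eqref{opc2} (and the converse) immediately preceding the statement. Your closing caveat about the measure-theoretic bookkeeping is well placed---the paper is equally informal there---but note that openness of $\{H>0\}$ alone does \emph{not} exclude $\dd\barq$-mass inside it; what makes the sufficiency direction go through is reading \eqref{opc2} as a genuine pointwise complementarity (so that on each open component of $\{H>0\}$ the function $\barq$ is constant, not merely $\barq'=0$ a.e.), which then forces the full measure $\dd\barq$, singular part included, to be carried by $\{H=0\}$.
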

\begin{corollary}
If $Q_{\rho}$ is continuous, then so is $\barq$, the optimal solution to the problem \eqref{q2}.
Moreover, we have either $\barq'(t)=0$ or
\begin{align} \label{barq1}
\barq(t)=(u')^{-1}(\lam Q_{\rho}(1-t))-\game(t)
\end{align}
for every $t\in(0,1)$. In particular, the latter case can only happen if $\game$ is continuous at $t$.
\end{corollary}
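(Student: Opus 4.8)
The plan is to read everything off from \citethm{main1}, i.e.\ from the OIDE \eqref{opc2} and the auxiliary function
\[
H(t):=\int_t^1\big[\lam Q_{\rho}(1-s)-u'\big(\barq(s)+\game(s)\big)\big]\ds,
\]
together with the properties of $H$ established just before \citethm{main1}: $H$ is continuous on $[0,1]$, $H\geq0$, $H(1)=0$, and on the open set $\{t\in(0,1):H(t)>0\}$ the solution $\barq$ is locally constant, so that $\barq'=0$ there. Since the integrand $s\mapsto\lam Q_{\rho}(1-s)-u'(\barq(s)+\game(s))$ is integrable on $(0,1)$ — one has $0\leq u'(\barq(s)+\game(s))\leq u'(\game(s))$ with $\int_0^1u'(\game(s))\ds<\infty$ by \eqref{integral1}, and $\int_0^1 Q_{\rho}(1-s)\ds=\BE{\rho}<\infty$ — the function $H$ is absolutely continuous with $H'(t)=u'(\barq(t)+\game(t))-\lam Q_{\rho}(1-t)$ for a.e.\ $t$.

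First I would show that $\barq$ is continuous. As a quantile $\barq$ is RCI, so a discontinuity would be an upward jump at some $t_0\in(0,1)$ with $\barq(t_0-)<\barq(t_0)$. Then $t_0\notin\{H>0\}$ (where $\barq$ is locally constant), so $H(t_0)=0$, and since $H\geq0$ this makes $t_0$ a global minimizer of $H$. Taking averages of the $H$-integrand over $(s,t_0)$ and over $(t_0,s)$ and letting $s\to t_0$ from each side — using that $Q_{\rho}$ is continuous, hence $Q_{\rho}(1-s)\to Q_{\rho}(1-t_0)$, and the one-sided limits of $\barq$ and $\game$ — the inequalities $H(s)\geq H(t_0)$ yield
\[
u'\big(\barq(t_0-)+\game(t_0-)\big)\;\leq\;\lam Q_{\rho}(1-t_0)\;\leq\;u'\big(\barq(t_0)+\game(t_0)\big).
\]
Since $u'$ is strictly decreasing this forces $\barq(t_0)+\game(t_0)\leq\barq(t_0-)+\game(t_0-)$, and since $\game$ is increasing, $\barq(t_0)\leq\barq(t_0-)$, contradicting the jump. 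Hence $\barq$ is continuous; this is the only place where the continuity of $Q_{\rho}$ is used.

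Next I would establish the dichotomy for every $t\in(0,1)$, noting $(0,1)=\{H>0\}\cup\{H=0\}$ since $H\geq0$. If $H(t)>0$, then $\barq$ is locally constant near $t$ and $\barq'(t)=0$. If $H(t)=0$, I claim \eqref{barq1} holds, i.e.\ $\phi(t):=u'(\barq(t)+\game(t))-\lam Q_{\rho}(1-t)=0$. Since $\barq$ and $Q_{\rho}$ are continuous and $\game$ is right-continuous, $\phi$ is right-continuous at $t$ with left limit $\phi(t-)=u'(\barq(t)+\game(t-))-\lam Q_{\rho}(1-t)\geq\phi(t)$ (as $\game(t-)\leq\game(t)$ and $u'$ is decreasing). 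If $\phi(t)>0$, then $\phi>0$ on $[t,t+\delta)$ by right-continuity and $\phi\to\phi(t-)\geq\phi(t)>0$ on the left, so $\phi>0$ on a two-sided neighborhood of $t$; then $H'=\phi>0$ a.e.\ there, $H$ is strictly increasing there, and since $H(t)=0$ and $t>0$, $H<0$ just to the left of $t$ — impossible. If $\phi(t)<0$, right-continuity gives $\phi<0$ on $[t,t+\delta)$, $H$ is strictly decreasing there, and $H<0$ just to the right of $t$ — again impossible. So $\phi(t)=0$, which is \eqref{barq1}. Finally, whenever the second alternative is the operative one, i.e.\ $\barq'(t)\neq0$, we have $t\notin\{H>0\}$, so $H(t)=0$; the inequality $H(s)\geq H(t)$ for $s<t$ then gives, on letting $s\uparrow t$, $\phi(t-)\leq0$, which together with $\phi(t-)\geq\phi(t)=0$ and the strict monotonicity of $u'$ yields $\game(t-)=\game(t)$, i.e.\ $\game$ is continuous at $t$.

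The one genuine obstacle is the bookkeeping with one-sided limits when $\game$ is not continuous: every limit $s\to t_0$ must be taken from the correct side, tracking that $1-s$ moves monotonically against the right-continuity of $Q_{\rho}$ and of $\game$; and one must check the arguments of $u'$ stay in $(\min\{0,\game(0)\},\infty)$ where $u'$ is finite — which follows from $\barq\geq0$, $\game(0)>-\infty$ (\citeassmp{boundedclaim}) and the integrability of the $H$-integrand, the boundary behavior $u'(\min\{0,\game(0)\}+)=+\infty$ only reinforcing the contradictions above. Everything else is routine given $H\geq0$, the absolute continuity of $H$, and the strict monotonicity of $u'$.
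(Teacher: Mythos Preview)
Your proposal is correct and follows essentially the same approach as the paper: both exploit that any zero of $H$ is a global minimizer, compute the one-sided limits of the integrand $\phi=H'$ there (using the assumed continuity of $Q_{\rho}$), and invoke the strict monotonicity of $u'$. The paper is slightly more economical, deriving the continuity of $\barq$, the continuity of $\game$, and the formula \eqref{barq1} in a single pass at each zero of $H$ (the two one-sided inequalities already force $\barq(t)+\game(t)=\barq(t-)+\game(t-)$ and hence $\phi(t)=0$ directly), whereas you separate these into three steps, with your middle sign-persistence case analysis for $\phi$ being a minor detour.
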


\begin{proof}
We have shown earlier that on each subinterval of $\{t\in(0,1)\;|\;H(t)>0\}$, $\barq$ is a constant, and hence continuous there. Because quantiles are right-continuous, $\barq$ is continuous at $0$ as well.

Now suppose $H(t)=0$ at some $t\in(0,1)$.
Because quantiles are RCI, $u'$ and $Q_{\rho}$ are continuous, we have
\begin{align*}
\lim_{\Delta\to 0+}\frac{H(t+\Delta)-H(t)}{\Delta} &= u'\big(\barq(t)+\game(t)\big)-\lam Q_{\rho}(1-t),\\
\lim_{\Delta\to 0+}\frac{H(t)-H(t-\Delta)}{\Delta}&=u'\big(\barq(t-)+\game(t-)\big)-\lam Q_{\rho}(1-t).
\end{align*}
Notice that $t$ is a global minimizer of $H$, so the first limit is nonnegative and the second one is nonpositive, giving
\[u'\big(\barq(t)+\game(t)\big)-\lam Q_{\rho}(1-t)\geq u'\big(\barq(t-)+\game(t-)\big)-\lam Q_{\rho}(1-t).\]
Because $u'$ is strictly decreasing, it follows that
\[\barq(t)+\game(t)\leq \barq(t-)+\game(t-).\]
But quantiles are increasing, the above should be an equation, which let us conclude $\barq$ and $\game$ are both continuous at $t$. This further shows that $H$ is differentiable at $t$, and
\[0=H'(t)=u'\big(\barq(t)+\game(t)\big)-\lam Q_{\rho}(1-t).\]
Consequently,
\begin{align*}
\barq(t)=(u')^{-1}(\lam Q_{\rho}(1-t))-\game(t),
\end{align*}
completing the proof.
\qed
\end{proof}

As much as OIDEs such as \eqref{opc2} are more accessible than the general condition \eqref{opc}, they are still hard to analyze in general. Next, we further simplify the condition by turning \eqref{opc2} into an ODE.

Denote
\begin{align}\label{opldef1}
\opl(t)=\int_{t}^{1}u'\big(\barq(s)+\game(s)\big)\ds.
\end{align}
Then $\opl\in \setcb$ and
\begin{align*}
\barq(t)=(u')^{-1}(-\opl'(t))-\game(t), \quad H(t)=-\opl(t)+\lam\int_t^1 Q_{\rho}(1-s)\ds.
\end{align*}
Noting that $\opl''$ exists almost everywhere because $\opl\in\setcb$, we can rewrite the OIDE \eqref{opc2} as
\begin{align}\label{vi001-1}
\min\left\{\frac{-\opl''(t)}{u''\big((u')^{-1}(-\opl'(t))\big)}-\game'(t),\; -\opl(t)+\lam\int_t^1 Q_{\rho}(1-s)\ds\right\}=0, \mbox{\quad a.e. } t\in(0,1).
\end{align}
Using the following simple fact that $\min\{a, \;b\}=0$ if and only if $\min\{ak, b\ell\}=0$ for any $k, \ell>0$, together with the boundary conditions, we end up with the following equation for $\opl$:
\begin{align} \label{vi002}
\begin{cases}
\min\Big\{\opl''(t)+\game'(t)u''\big((u')^{-1}(-\opl'(t))\big),\; -\opl(t)+\lam\int_t^1 Q_{\rho}(1-s)\ds\Big\}=0, \mbox{\quad a.e. } t\in(0,1),\\
\opl(1-)=0, \quad \opl'(0+)=-u'(\game(0)).
\end{cases}
\end{align}
In \eqref{vi002}, if $\opl'$ or $\game$ is not differentiable at some point $t\in (0,1)$, we convent
\[\opl''(t)+\game'(t)u''\big((u')^{-1}(-\opl'(t))\big)=+\infty,\]
in which case the equation reduces to \[\opl(t)=\lam\int_t^1 Q_{\rho}(1-s)\ds,
\]
or $\opl'(t)=-\lam Q_{\rho}(1-t)$
whenever $Q_{\rho}$ is continuous at $t$.
Also note that it can happen that $u'(\game(0))=+\infty$.

Equation \eqref{vi002} is a system of variational inequalities of ODEs, or a
single--obstacle second-order ODE with mixed boundary conditions.
The connection between \eqref{vi002} and the optimal solution to the problem \eqref{q2} is given as follows.
\begin{theorem}[Characterization of optimal solution II]\label{main1}
We have the following assertions.
\begin{enumerate}[(1).]
\item
If $\barq$ is the unique optimal solution to the problem \eqref{q2}, then
\begin{align}\label{opldef3}
\opl(s)=\int_{s}^{1}u'\big(\barq(t)+\game(t)\big)\dt
\end{align}
is a solution to \eqref{vi002} in $\setcb$.

\item
If $\opl$ is a solution to \eqref{vi002} in $\setcb$, then
\begin{align*}
\barq(t)=(u')^{-1}(-\opl'(t))-\game(t)
\end{align*}
is the unique optimal solution to the problem \eqref{q2}.
\end{enumerate}
As a consequence, \eqref{vi002} admits at most one solution in $\setcb$.
\end{theorem}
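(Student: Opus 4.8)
The plan is to deduce Theorem~\ref{main1} (the characterization via the variational inequality \eqref{vi002}) from the already-established Theorem~\ref{main1} on the OIDE \eqref{opc2}, exploiting the substitution $\opl(s)=\int_s^1 u'(\barq(t)+\game(t))\dt$ and its inverse $\barq(t)=(u')^{-1}(-\opl'(t))-\game(t)$. The whole argument is really a careful bookkeeping of this change of variables, so I would organize it as a pair of implications matching the two enumerated assertions, and then derive the uniqueness statement as an immediate corollary.

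For assertion (1), I would start from an optimal $\barq\in\setq$, which by Theorem~\ref{main1} (OIDE version) satisfies \eqref{opc2}. I would first verify that $\opl$ defined by \eqref{opldef3} lies in $\setcb$: using $\BE{\rho}<\infty$, \eqref{integral1}, and the monotonicity of $u'$ together with $\barq\geq 0$, the integrand $u'(\barq(t)+\game(t))$ is integrable on $(0,1)$ and decreasing, so $\opl$ is absolutely continuous with $\opl'(t)=-u'(\barq(t)+\game(t))$ RCI. Hence $-\opl'(t)=u'(\barq(t)+\game(t))>0$ lies in the range of $u'$, and applying $(u')^{-1}$ recovers $\barq(t)+\game(t)=(u')^{-1}(-\opl'(t))$, i.e. $\barq(t)=(u')^{-1}(-\opl'(t))-\game(t)$. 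The boundary conditions are then read off: $\opl(1-)=0$ is automatic, and $\opl'(0+)=-u'(\barq(0+)+\game(0+))=-u'(\game(0))$ by Corollary~\ref{initial}. Finally I would differentiate the identity $\opl'(t)=-u'(\barq(t)+\game(t))$ a.e.\ (valid since $\opl\in\setcb$ and $\barq$, $\game$ are of bounded variation) to get $\opl''(t)=-u''(\barq(t)+\game(t))(\barq'(t)+\game'(t))$, substitute $\barq(t)+\game(t)=(u')^{-1}(-\opl'(t))$, and solve for $\barq'(t)$, obtaining $\barq'(t)=-\opl''(t)/u''((u')^{-1}(-\opl'(t)))-\game'(t)$. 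Plugging this and the identity $H(t)=-\opl(t)+\lam\int_t^1 Q_\rho(1-s)\ds$ into the OIDE \eqref{opc2}, and using the elementary scaling fact $\min\{a,b\}=0\iff\min\{ak,b\ell\}=0$ for $k,\ell>0$ (to clear the negative factor $u''<0$), yields exactly \eqref{vi002}. The convention for non-differentiability points is consistent because there $\barq'(t)=0$ forces $H(t)=0$ in \eqref{opc2}, i.e.\ $\opl(t)=\lam\int_t^1 Q_\rho(1-s)\ds$.

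For assertion (2), I would run the same computation in reverse: given $\opl\in\setcb$ solving \eqref{vi002}, set $\barq(t)=(u')^{-1}(-\opl'(t))-\game(t)$. I must check $\barq\in\setq$, i.e.\ $\barq$ is RCI and nonnegative. Right-continuity and monotonicity: $-\opl'$ is RCI (since $\opl\in\setcb$), $(u')^{-1}$ is continuous and decreasing, so $(u')^{-1}(-\opl'(t))$ is right-continuous and decreasing in... wait, it is \emph{decreasing}, so I need the obstacle/constraint structure to restore monotonicity of $\barq$ — precisely on $\{H>0\}$ the first term in \eqref{vi002} vanishes, which gives $\barq'(t)\geq 0$ there (after undoing the sign), while on the contact set $\opl(t)=\lam\int_t^1 Q_\rho(1-s)\ds$ forces $\opl'(t)=-\lam Q_\rho(1-t)$ hence $\barq(t)=(u')^{-1}(\lam Q_\rho(1-t))-\game(t)$; one must argue these pieces glue into a globally increasing function, using that at junctions continuity is not guaranteed but the inequality $\barq(t-)\leq\barq(t)$ still holds. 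Nonnegativity follows from the boundary condition $\opl'(0+)=-u'(\game(0))$, giving $\barq(0+)=0$, combined with monotonicity. Then reversing the algebra shows $\barq$ satisfies \eqref{opc2}, so by Theorem~\ref{main1} (OIDE version) it is the unique optimal solution to \eqref{q2}. The ``at most one solution'' consequence is then formal: if $\opl_1,\opl_2$ both solve \eqref{vi002}, assertion (2) makes each yield the (unique) optimal $\barq$, so $\opl_i'(t)=-u'(\barq(t)+\game(t))$ coincide a.e., and $\opl_i(1-)=0$ pins down the constant of integration, forcing $\opl_1=\opl_2$.

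The main obstacle is the regularity/monotonicity accounting in assertion (2): verifying that $\barq(t)=(u')^{-1}(-\opl'(t))-\game(t)$ built from an abstract solution $\opl\in\setcb$ is genuinely an increasing function (hence a legitimate element of $\setq$), since $(u')^{-1}$ reverses order and the compensating monotonicity is hidden in the variational inequality rather than being manifest. This requires carefully exploiting the structure of \eqref{vi002} — on the non-contact region the ODE $\opl''+\game'u''((u')^{-1}(-\opl'))=0$ encodes $\barq'\geq 0$, and across the free boundary one must check the pieces match up monotonically — together with the handling of the points where $\opl'$ or $\game$ fails to be differentiable via the stated $+\infty$ convention. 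Everything else is a mechanical, if slightly tedious, transcription of \eqref{opc2} through the substitution.
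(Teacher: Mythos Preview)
Your approach is essentially the same as the paper's: both directions are handled by transcribing between the OIDE \eqref{opc2} and the variational inequality \eqref{vi002} via the substitution $\opl'(t)=-u'(\barq(t)+\game(t))$, with boundary conditions checked using Corollary~\ref{initial}.

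One point where you overcomplicate: in assertion (2), the paper does not argue monotonicity of $\barq$ piecewise on contact/non-contact sets. Instead it first reverses the algebra to obtain \eqref{opc2} (this is purely formal, using the scaling fact $\min\{a,b\}=0\iff\min\{ak,b\ell\}=0$ for $k,\ell>0$ with $k=-u''>0$), and then simply reads off $\barq'(t)\geq 0$ a.e.\ directly from the $\min$ structure of \eqref{opc2}. This is much cleaner than a free-boundary gluing argument. Also, you have a sign slip: since $\opl\in\setcb$ means $\opl'$ is increasing, $-\opl'$ is decreasing, and $(u')^{-1}$ is decreasing, so $(u')^{-1}(-\opl'(t))$ is \emph{increasing} (not decreasing); the issue is that subtracting the increasing $\game$ might destroy monotonicity, which is exactly what \eqref{opc2} rules out.
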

\begin{proof}
\begin{enumerate}[(1).]
\item Because $\barq$ solves \eqref{q2}, we have
\eqref{vi001-1}. Consequently, we get the first equation in \eqref{vi002}.
The first boundary condition $\opl(1-)=0$ is evident.
It follows from \citecoro{initial} that $\barq(0)=0$, which implies the second boundary condition $\opl'(0+)=-u'(\game(0))$.
Moreover, as quantiles are RCI and $u'$ is continuously decreasing, it is easy to verify that $\opl\in\setcb$.
\item
Because $\opl$ solves \eqref{vi002} in $\setcb$, it satisfies \eqref{vi001-1}. Hence
\begin{align*}
\barq(t)=(u')^{-1}(-\opl'(t))-\game(t)
\end{align*}
satisfies \eqref{opc2}, which in turn is equivalent to f\eqref{opc}.
As $\opl\in \setcb$ and $\game$ is RCI, $\barq$ is right-continuous. By \eqref{opc2}, $\barq$ is also increasing; so $\barq$ is a quantile.
It follows then from \citelem{op1} that $\barq$ is the unique optimal solution to \eqref{q2}.
\end{enumerate}
\qed
\end{proof}

The following theorem provides an optimal portfolio to the problem \eqref{eut+}.
Its proof is standard (see \cite{KS98}); but we reproduce it here for the readers' convenience.

\begin{theorem}[Optimal portfolio]\label{Optimalportfolio}
Let $\lambda>0$ and $\opl$ be a solution to \eqref{vi002} in $\setcb$ such that
\begin{align}\label{budget}
\int_0^1 \barq(t)Q_{\rho}(1-t)\dt=x
\end{align}
where
\begin{align*}
\barq(t)=(u')^{-1}(-\opl'(t))-\game(t), \quad t\in[0,T],
\end{align*}
and $U$ be arbitrary chosen from $\setrho$.
Then an optimal portfolio to the problem \eqref{eut+} is given in a feedback form as follows:
\begin{align} \label{optimalportfoliopi}
\pi^{*}(t)
&=\exp\left(\int_0^{t}\Big(r(s)+\frac{1}{2}\vert \theta(s)\vert^2\Big)\ds
+\int_0^{t}\theta(s)^{\tr}\dd W(s)\right) \big(\sigma(t)^{\tr}\big)^{-1}Z(t)+\big(\sigma(t)^{\tr}\big)^{-1}\theta(t)X^{\pi^{*}}(t),
\end{align}
whereas the corresponding optimal wealth process is given by
\[X^{\pi^{*}}(t)=\exp\left(\int_0^{t}\Big(r(s)+\frac{1}{2}\vert \theta(s)\vert^2\Big)\ds+\int_0^{t}\theta(s)^{\tr}\dd W(s)\right)\BE{Q_{\rho}(U)\barq(1-U)\;\Big|\;\BF_t},\quad t\in[0,T],\]
where $Z$ is an $\{{\cal F}_t\}_{t\geq0}$-progressively measurable process such that
\[\int_{0}^{T}|Z(s)|^{2}\ds<\infty,\;\;\mbox{a.s.}\]
and
\begin{align*}
\BE{Q_{\rho}(U)\barq(1-U)\;\Big|\;\BF_t}
=x+\int_{0}^{t}Z(s)^{\tr}\dd W(s),\quad t\in[0,T].
\end{align*}
\end{theorem}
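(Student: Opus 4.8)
The plan is to exploit the two-step martingale reduction set up before \citelem{martingalemethod}: once the optimal terminal payoff $X^*$ of the static problem \eqref{static1} has been identified, it only remains to exhibit an admissible portfolio that replicates $X^*$, and the claimed feedback formula \eqref{optimalportfoliopi} is simply the output of the classical replication argument in a complete market. First I would note that, since $\opl\in\setcb$ solves \eqref{vi002}, part (2) of \citethm{main1} gives that $\barq(t)=(u')^{-1}(-\opl'(t))-\game(t)$ is the unique optimal solution of \eqref{q2}; combined with the budget identity \eqref{budget} (which is \eqref{q2a}), \citelem{Lagrangemethod} upgrades this to optimality for the constrained problem \eqref{q1}; and \citeprop{quantile} then shows that $X^*:=\barq(1-U)$ is optimal for \eqref{static1}.

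Next I would replicate $X^*$. Write $H(t):=\exp\big(-\int_0^t(r(s)+\tfrac12|\theta(s)|^2)\ds-\int_0^t\theta(s)^\tr\dd W(s)\big)$, so that $H(T)=\rho$ and $H(0)=1$; using $\rho=Q_\rho(U)$ for $U\in\setrho$, the budget relation reads $\BE{H(T)X^*}=\int_0^1\barq(t)Q_\rho(1-t)\dt=x$. The process $N(t):=\BE{H(T)X^*\mid\BF_t}=\BE{Q_\rho(U)\barq(1-U)\mid\BF_t}$ is a nonnegative $\{\BF_t\}$-martingale with $N(0)=x$; because the market is standard and complete, the martingale representation theorem yields a progressively measurable $Z$ with $\int_0^T|Z(s)|^2\ds<\infty$ a.s. and $N(t)=x+\int_0^t Z(s)^\tr\dd W(s)$. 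Setting $X^{\pi^*}(t):=N(t)/H(t)$ reproduces exactly the wealth process stated in the theorem (since $1/H(t)$ equals the stated exponential), and this process is nonnegative, continuous, satisfies $X^{\pi^*}(0)=x$, and has terminal value $X^{\pi^*}(T)=X^*$.

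Then I would apply It\^o's formula to the product $X^{\pi^*}=N\cdot(1/H)$. One computes $\dd(1/H)(t)=(1/H(t))\big[(r(t)+|\theta(t)|^2)\dt+\theta(t)^\tr\dd W(t)\big]$, and together with $\dd N(t)=Z(t)^\tr\dd W(t)$ and the cross-variation $\dd\langle N,1/H\rangle(t)=(1/H(t))Z(t)^\tr\theta(t)\dt$, the diffusion coefficient of $X^{\pi^*}$ becomes $X^{\pi^*}(t)\theta(t)^\tr+H(t)^{-1}Z(t)^\tr$ while the drift collapses to $r(t)X^{\pi^*}(t)+\big(X^{\pi^*}(t)\theta(t)+H(t)^{-1}Z(t)\big)^\tr\theta(t)$. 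Matching this against the wealth SDE \eqref{eq:state-positive} forces $\sigma(t)^\tr\pi^*(t)=X^{\pi^*}(t)\theta(t)+H(t)^{-1}Z(t)$, i.e. $\pi^*(t)=(\sigma(t)^\tr)^{-1}\theta(t)X^{\pi^*}(t)+H(t)^{-1}(\sigma(t)^\tr)^{-1}Z(t)$, which is precisely \eqref{optimalportfoliopi}, and then the drift matches identically, so $(X^{\pi^*},\pi^*)$ solves \eqref{eq:state-positive}. Admissibility is then checked directly: $\pi^*$ is progressively measurable (inherited from $Z$, $H$, $X^{\pi^*}$ and the coefficients), $\int_0^T|\sigma(t)^\tr\pi^*(t)|^2\dt\leq 2(\sup_{[0,T]}|X^{\pi^*}|)^2\int_0^T|\theta|^2\dt+2(\sup_{[0,T]}H^{-1})^2\int_0^T|Z|^2\dt<\infty$ a.s. (all suprema pathwise finite by continuity), and $X^{\pi^*}\geq 0$. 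Hence $(X^{\pi^*},\pi^*)$ is an admissible pair with initial wealth $x$ whose terminal wealth $X^*$ is optimal for \eqref{static1}, so by the two-step reduction $\pi^*$ is an optimal portfolio for \eqref{eut+}.

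There is no deep obstacle here, which is why the authors, while calling the argument standard, still reproduce it. The only places needing genuine care are the sign bookkeeping in the exponent of $1/H$ and the cross-variation term in the It\^o computation, so that the drift collapses \emph{exactly} to the wealth-equation drift rather than leaving a spurious $|\theta|^2$ term, and the verification that $\sigma^\tr\pi^*$ is pathwise square-integrable using only the pathwise square-integrability of $Z$ rather than any a priori moment bound.
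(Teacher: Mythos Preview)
Your proposal is correct and follows essentially the same route as the paper: first invoke \citethm{main1}(2), \citelem{Lagrangemethod} and \citeprop{quantile} to identify $X^*=\barq(1-U)$ as optimal for \eqref{static1}, then run the classical complete-market replication via martingale representation of the discounted-wealth martingale and It\^o's formula to extract $\pi^*$. The paper's $\varrho$ and $Y$ are your $H$ and $X^{\pi^*}$, and the It\^o computation and admissibility verification match line for line.
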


\begin{proof}
\citethm{main1} yields that $\barq$ is an optimal solution to the problem \eqref{q2}.
Thanks to \citelem{Lagrangemethod} and \eqref{budget}, we conclude $\barq$ also solves \eqref{q1}.

Let $X^{*}=\barq(1-U)$. Since $\barq\in\setq$, it yields $X^{*}\geq 0$.
Since $U\in L^{0}_{\BF_{T}}$, we have $X^{*}\in L^{0}_{\BF_{T}}$.
It follows from \eqref{budget} and $\rho=Q_{\rho}(U)$ that
\begin{align*}
\BE{\rho X^*}=\BE{Q_{\rho}(U) \barq(1-U)}=\int_0^1 Q_{\rho}(t)\barq(1-t)\dt=x;
\end{align*}
so $X^*\in\seta_{x}$.
By \citeprop{quantile}, we see that $X^*$ is an optimal solution to the problem \eqref{static2} and hence solves \eqref{static1}.
Because our market is standard and complete, we can follow \cite[Theorem 6.6, page 24]{KS98} to conclude that $X^*$ can be replicated by an admissible strategy.
Here we give the major steps of the proof.

Define an $\{{\cal F}_t\}_{t\geq0}$-progressively measurable process $\varrho$ by
\begin{align}\label{varrho}
\varrho(t)=\exp\left(-\int_0^{t}\Big(r(s)+\frac{1}{2}\vert \theta(s)\vert^2\Big)\ds-\int_0^{t}\theta(s)^{\tr}\dd W(s)\right),\quad t\in[0,T].
\end{align}
Notice $\varrho(T)=\rho=Q_{\rho}(U)$ and
\[\dd \varrho(t)^{-1}=\varrho(t)^{-1}\Big(\big(r(t)+|\theta(t)|^{2}\big)\dt+\theta(t)^{\tr}\dd W(t)\Big).\]
Let
\begin{align*}
Y(t)=\varrho(t)^{-1} \BE{\varrho(T) X^{*}\;\Big|\;\BF_t}=\varrho(t)^{-1} \BE{Q_{\rho}(U)\barq(1-U)\;\Big|\;\BF_t},\quad t\in[0,T].
\end{align*}
Then $Y$ is an $\{{\cal F}_t\}_{t\geq0}$-progressively measurable process.
Clear, $\varrho Y$ is a martingale and $Y(T)=X^{*}$.
Because our market is standard,
by \cite[Lemma 6.7, page 24]{KS98},
there exists an $\{{\cal F}_t\}_{t\geq0}$-progressively measurable process $Z$ such that
\[\int_{0}^{T}|Z(s)|^{2}\ds<\infty, \;\;\mbox{a.s.}\]
and
\begin{align*}
\varrho(t)Y(t)=\BE{\varrho(T)Y(T)}+\int_{0}^{t}Z(s)^{\tr}\dd W(s)
=x+\int_{0}^{t}Z(s)^{\tr}\dd W(s),\quad t\in[0,T].
\end{align*}
An application of It\^{o}'s lemma to the product of $\varrho(t)^{-1}$ and $\varrho(t)Y(t)$ leads to
\begin{equation*}
\dd Y(t)=\Big[\big(r(t)+|\theta(t)|^{2}\big)Y(t)+\varrho(t)^{-1}Z(t)^{\tr}\theta(t)\Big]\dt+
\Big(\varrho(t)^{-1}Z(t)+Y(t)\theta(t)\Big)^{\tr}\dd W(t), \quad t \in [0, T].
\end{equation*}
We set
\begin{align*}
\pi^{*}(t)=\big(\sigma(t)^{\tr}\big)^{-1}\Big(\varrho(t)^{-1}Z(t)+Y(t)\theta(t)\Big), \quad t \in [0, T].
\end{align*}
Then it is an $\{{\cal F}_t\}_{t\geq0}$-progressively measurable process such that
\begin{align}\label{eq:Y2}
\dd Y(t)=\Big[r(t)Y(t)+\pi^{*}(t)^{\tr}\sigma(t)\theta(t)\Big]\dt+
\pi^{*}(t)\sigma(t)^{\tr}\dd W(t), \quad t \in [0, T].
\end{align}
Comparing it to the wealth process \eqref{eq:state-positive} and noticing $Y(0)=x$,
we conclude that
\[Y(t)=X^{\pi^{*}}(t), \quad t \in [0, T].\]
In particular, $X^{\pi^{*}}(T)=Y(T)=X^{*}$, which means that $\pi^{*}$ is a portfolio that replicates $X^{*}$ with the initial endowment $x$.
Because $\varrho^{-1}$ and $Y$ are continuous processes (so that they are $\omega$-wisely bounded), and both $\theta$ and $Z$ are square integrable processes, we conclude
\[\int_{0}^{T}|\sigma(t)^{\tr}\pi^{*}(s)|^{2}\ds<\infty, \;\;\mbox{a.s.}.\]
Thus, $\pi^{*}$ is an admissible portfolio.
Since $\pi^{*}$ replicates an optimal solution $X^*$ to the problem \eqref{static1}, we conclude that it is an optimal portfolio to the problem \eqref{eut+}.
\qed
\end{proof}

Note that if the set $\setrho$ is not a singleton, then the choice of $U$ is not unique and so the optimal portfolio to the problem \eqref{eut+} may not be unique.

In the above result we require the existence of the Lagrange multiplier $\lambda>0$.
Such an assumption needs to be checked for a more concrete problem. The assumption may not hold even in the classical case without intractable claims; refer to \cite{JXZ08} for a comprehensive discussion in the classical case.

\section{A Special Case of Exponential Utility}\label{expsp}

We have reduced the original problem to solving the equation \eqref{vi002}. Such an equation can be in general solved numerically, but more concrete solutions are possible for special cases. In this section we study such a case where the utility function $u(x)=-e^{-\alpha x}$ for some $\alpha>0$ and $\game$ is continuous. Then \eqref{vi002} becomes
\begin{align} \label{vi003}
\begin{cases}
\min\Big\{\opl''(t)+\alpha\game'(t)\opl'(t),\; -\opl(t)+\lam\int_t^1 Q_{\rho}(1-s)\ds\Big\}=0, \quad \mbox{a.e.}\; t\in(0,1),\\
\opl(1-)=0, \quad \opl'(0+)=-\alpha e^{-\alpha \game(0)}.
\end{cases}
\end{align}
Note the second-order differential operator is now linear, which enables us to give a more explicit solution.

For any constant $c\geq 0$, consider the following integral equation for $\chi$:
\[ \chi(t)=c\int_t^1 e^{\alpha \game(\chi(s))}\ds,\quad t\in [0,1].\]
Clearly, any solution $\chi(t)$ is continuous in $t$.
Let $\overline{\chi}=\chi/c$. Then
\[ \overline{\chi}(t)=\int_t^1 e^{\alpha \game(c\overline{\chi}(s))}\ds,\quad t\in [0,1].\]
It follows from the monotone convergence theorem that $\overline{\chi}(t)$ is increasing in $c$; hence $\chi(t)$ is strictly increasing in $c$ for every $t\in[0,1)$.
Since $\game$ is continuous, $\overline{\chi}(t)$ and thus ${\chi}(t)$ are continuous in $c$ for any fixed $t\in[0,1)$.
If $c=0$, then $\chi(0)=0$. If $c>e^{-\alpha \game(0)}$, then, since $\chi\geq 0$ and
and $\game$ is increasing,
\[ \chi(0)=c\int_0^1 e^{\alpha \game(\chi(s))}\ds\geq c\int_0^1 e^{\alpha \game(0)}\ds>1.\]
So there exists a constant $c>0$ such that $\chi(0)=1$, leading to the following equations
\[ \chi'(t)=-ce^{\alpha \game(\chi(t))}, \quad \chi(0)=1, \quad \chi(1)=0.\]
Then
\[\chi''(t)=-ce^{\alpha \game(\chi(t))}\alpha\game'(\chi(t))\chi'(t)=\alpha\game'(\chi(t))(\chi'(t))^2,\quad \mbox{a.e.}\; t\in(0,1).\]
Let $\opll(t)=\opl(\chi(t))$, then $\opll'(t)=\opl'(\chi(t))\chi'(t)$ and
\[\opll''(t)=\opl''(\chi(t))(\chi'(t))^2+\opl'(\chi(t))\chi''(t)
=\big(\opl''(\chi(t))+\alpha\game'(\chi(t))\opl'(\chi(t))\big)(\chi'(t))^2.\]
Replacing $t$ by $\chi(t)$ in \eqref{vi003} and using the above facts, we get
\begin{align} \label{vi004}
\begin{cases}
\min\Big\{\opll''(t),\; -\opll(t)+\lam\int_{\chi(t)}^1 Q_{\rho}(1-s)\ds\Big\}=0, \quad \mbox{a.e. } t\in(0,1),\\
\opll(0+)=0, \quad\opll'(1-)=\opl'(\chi(1-))\chi'(1-)=c\alpha .
\end{cases}
\end{align}
This indicates that $\opll$ is the largest convex function dominated by the map $$t\mapsto \lam\int_{\chi(t)}^1 Q_{\rho}(1-s)\ds$$ on $(0,1)$ and satisfies the boundary conditions $\opll(0+)=0$ and $\opll'(1-)=c\alpha$.
This leads to an analytical expression of the solution to the problem \eqref{q2}: $\opl(t)=\opll(\chi^{-1}(t))$.

Because of the special form of the exponential utility function $u$, we can solve \eqref{q2} in an alternative way. Indeed, rewrite \eqref{q2} as
\begin{align*}
V_{\lam}=\sup_{Q\in \setq} \int_0^1\big[u\big(Q(t)\big) |u(\game(t))| -\lam Q(t)Q_{\rho}(1-t)\big]\dt.
\end{align*}
Thanks to the triangle inequality and \citeassmp{boundedclaim}, we have
\begin{align*}
\BE{|u(\pr)|}&\leq \BE{|u(\pr)-u(\game(0))|}+|u(\game(0))|\\
&=\BE{u(\pr)-u(\game(0))}+|u(\game(0))|<\infty.
\end{align*}
Also we may assume $\BE{|u(\pr)|}>0$; for otherwise $\pr$ would be a constant and \eqref{eut+} would reduce to the classical EUM problem \eqref{eut} that has been solved in the literature. The above allows us to define
\[ w(t):=\frac{1}{\BE{|u(\pr)|}}\int_0^{t} |u(\game(1-s))| \ds.\]
It is easy to check $w(0)=0$, $w(1)=1$ and $w$ is differentiable and increasing; so $w$ is a probability weighting (or distortion) function broadly involved in behavioral finance and insurance among others; see, e.g., Kahneman and Tversky's prospect theory \cite{KT79}, Quiggin's rank--dependent expected utility theory \cite{Q82} and Wang's distortion insurance premium principle \cite{W95,W96,WYP97}.

We now rewrite the problem \eqref{q2} as
\begin{align}\label{opi3b}
V_{\lam}=\BE{|u(\pr)|}\sup_{Q\in \setq} \int_0^1\Big[u\big(Q(t)\big)w'(1-t)-\lam Q(t)\frac{Q_{\rho}(1-t)}{\BE{|u(\pr)|}} \Big]\dt.
\end{align}
Intriguingly, the above is actually a quantile optimization problem arising from a behavioral finance model under the rank-dependent utility theory that has been studied and solved by Xia and Zhou \cite{XZ16} and Xu \cite{X16}.
\begin{theorem}\label{mainthm}
Let $u(x)=-e^{-\alpha x}$, $\alpha>0$, and let
$\delta(\cdot)$ be the concave envelope of the following function
\[t\mapsto -\int_0^{w^{-1}(1-t)}\frac{Q_{\rho}(s)}{\BE{|u(\pr)|}}\ds\]
on $[0,1]$. Then
\[\barq(t):=(u')^{-1}\big(\lam\delta'(t) \big),\quad t\in(0,1),\]
is an optimal solution to \eqref{opi3b} as well as to \eqref{q2}; and
\[X^*:=(u')^{-1}\left(\lam\delta'(1-U)\right)\]
is an optimal solution to the problem \eqref{static1}, where $U\in\setrho$ is arbitrary and $\lam>0$ satisfies
\[\int_0^1 (u')^{-1}\left(\lam\delta'(t) \right)Q_{\rho}(1-t)\dt=x.\]

\end{theorem}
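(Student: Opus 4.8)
The plan is to recognize the reformulated problem \eqref{opi3b} as a rank--dependent utility (RDU) portfolio quantile problem of exactly the type solved in Xia and Zhou \cite{XZ16} and Xu \cite{X16}, and to translate their solution into the statement. With $u(x)=-e^{-\alpha x}$ and $\game$ continuous, \eqref{opi3b} asks one to maximize the distorted (Choquet) utility $\int_0^1 u(Q(t))\,w'(1-t)\dt$ penalized by the linear budget term $\lambda\int_0^1 Q(t)Q_\rho(1-t)\dt/\BE{|u(\pr)|}$, over all increasing nonnegative quantiles $Q\in\setq$. For this class the optimal quantile has the ``first order plus ironing'' form: the unconstrained pointwise maximizer is $(u')^{-1}$ of (Lagrange multiplier $\times$ effective state price divided by the weighting density), which is generically not monotone, and monotonicity is restored by replacing the associated cumulative--cost function with its concave envelope; the optimal quantile is then $(u')^{-1}$ of the derivative of that envelope (with the usual clamped convention for $(u')^{-1}$ at the endpoints of its domain, which also forces the boundary value $\barq(0)=0$ prescribed by \citecoro{initial}).

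First I would make the correspondence with \cite{XZ16,X16} explicit via the ``clock change'' $z=w(\cdot)$ (equivalently the substitution $t\mapsto w^{-1}(1-t)$): under it the weighting $w'(1-t)\dt$ becomes Lebesgue measure on $(0,1)$, and the budget weight $Q_\rho(1-t)\dt/\BE{|u(\pr)|}$ is pushed forward to a density whose antiderivative is, up to an additive constant, the map $t\mapsto-\int_0^{w^{-1}(1-t)}Q_\rho(s)\ds/\BE{|u(\pr)|}$; its concave envelope is precisely $\delta$. Keeping track of the $\BE{|u(\pr)|}$ normalization that was pulled out in front of the supremum in \eqref{opi3b}, the cited results then yield $\barq(t)=(u')^{-1}(\lambda\delta'(t))$ as the optimal quantile for \eqref{opi3b}, hence for \eqref{q2}. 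Along the way I would verify $\barq\in\setq$ --- concavity of $\delta$ makes $\delta'$ decreasing and $u'$ strictly decreasing makes $(u')^{-1}$ decreasing, so $\barq$ is increasing and right--continuous --- and note that the finiteness of the optimal value of \eqref{q2} is exactly the content of \citelem{finiteoptimal1} under the standing well--posedness assumption.

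It remains to install the Lagrange multiplier and lift back to the static problem. The map $\lambda\mapsto\int_0^1(u')^{-1}(\lambda\delta'(t))Q_\rho(1-t)\dt$ is continuous and decreasing, so --- assuming, as in the statement, that it attains the value $x$ for some $\lambda>0$ --- this $\lambda$ together with \citelem{Lagrangemethod} upgrades $\barq$ from an optimizer of \eqref{q2} to an optimizer of the budget--constrained problem \eqref{q1}; \citeprop{quantile} then gives that $X^*=\barq(1-U)=(u')^{-1}(\lambda\delta'(1-U))$ solves \eqref{static1} for any $U\in\setrho$.

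A self--contained alternative to quoting \cite{XZ16,X16} would be to verify directly, using the obstacle characterization of a concave envelope together with the change of variable ($\chi$ and $\opll$ introduced before the statement), that $\opl(t):=\int_t^1 u'(\barq(s)+\game(s))\ds$ with $\barq(s)=(u')^{-1}(\lambda\delta'(s))$ solves the variational inequality \eqref{vi003}, and then invoke \citethm{main1}: on the intervals where the envelope coincides with the underlying function one recovers the obstacle equality $-\opl(t)+\lambda\int_t^1 Q_\rho(1-s)\ds=0$, while on the ironed intervals $\delta$ is affine, which after the clock change yields $\opll''=0$, i.e. the differential branch of \eqref{vi003}. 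I expect the main obstacle to be precisely this bookkeeping --- aligning the compositions with $w$, $w^{-1}$ (or $\chi$), the $\BE{|u(\pr)|}$ factors, and the boundary conditions so that the object ironed in \cite{XZ16,X16} is literally the $\delta$ of the statement --- together with checking the regularity/integrability hypotheses of the cited results (the weighting $w$ was already shown before the statement to be a legitimate, differentiable distortion) and, as the paper notes after \citethm{Optimalportfolio}, the a priori non--obvious existence of the multiplier $\lambda>0$.
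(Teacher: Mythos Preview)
Your proposal is correct and matches the paper's approach: the paper does not give a standalone proof of this theorem but simply observes that \eqref{opi3b} is a rank--dependent utility quantile problem of the type solved in \cite{XZ16,X16}, and states the result. Your write-up supplies the translation details (the clock change $t\mapsto w^{-1}(1-t)$, the identification of $\delta$, and the lift back via \citelem{Lagrangemethod} and \citeprop{quantile}) that the paper leaves implicit, and your ``self--contained alternative'' via \eqref{vi003}/\eqref{vi004} is exactly the $\chi$--substitution argument the paper sketches in the paragraphs immediately preceding the theorem.
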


Indeed, we have the following more explicit expression for $X^*$ without involving the Lagrange multiplier $\lam$.
\begin{corollary}\label{specialcase}
Let $U\in\setrho$ and
\begin{align*}
X^*:=\frac{1}{\BE{\rho}}\left(x+\alpha^{-1}\int_0^1 \ln(\delta'(s))Q_{\rho}(1-s)\ds\right)-\alpha^{-1} \ln(\delta'(1-U)).
\end{align*}	
Then $X^*$ is an optimal solution to \eqref{static1} with the initial endowment $x>0$.
\end{corollary}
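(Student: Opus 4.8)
The plan is to specialize \citethm{mainthm} to the exponential utility $u(x)=-e^{-\alpha x}$ and to exploit that, for this $u$, the inverse marginal utility is a logarithm, so that the Lagrange multiplier $\lam$ enters the optimal profile only as an additive constant. Since $u'(x)=\alpha e^{-\alpha x}$, we have $(u')^{-1}(y)=-\alpha^{-1}\ln(y/\alpha)$ for $y>0$, and hence
\[
(u')^{-1}\big(\lam\,\delta'(t)\big)=-\alpha^{-1}\ln\big(\delta'(t)\big)-\alpha^{-1}\ln(\lam/\alpha),\qquad t\in(0,1).
\]
By \citethm{mainthm} there exists $\lam>0$ such that $X^*_{\lam}:=(u')^{-1}\big(\lam\,\delta'(1-U)\big)$ is an optimal solution to \eqref{static1} and the budget identity $\int_0^1 (u')^{-1}\big(\lam\,\delta'(t)\big)Q_{\rho}(1-t)\dt=x$ holds. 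It therefore suffices to rewrite $X^*_{\lam}$ in a form in which $\lam$ no longer appears.

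First I would plug the displayed expression for $(u')^{-1}(\lam\delta'(t))$ into the budget identity. Using $\int_0^1 Q_{\rho}(1-t)\dt=\BE{\rho}$ (already recorded in the proof of \citelem{unboundedV}), the budget identity becomes
\[
-\alpha^{-1}\int_0^1 \ln\big(\delta'(t)\big)Q_{\rho}(1-t)\dt-\alpha^{-1}\ln(\lam/\alpha)\,\BE{\rho}=x .
\]
Since the left-hand side equals the finite number $x$ and $\BE{\rho}<\infty$, this identity simultaneously shows that $\int_0^1 \ln(\delta'(t))Q_{\rho}(1-t)\dt$ is finite, and solving for the constant yields
\[
-\alpha^{-1}\ln(\lam/\alpha)=\frac{1}{\BE{\rho}}\left(x+\alpha^{-1}\int_0^1 \ln\big(\delta'(s)\big)Q_{\rho}(1-s)\ds\right).
\]
Substituting this back into $X^*_{\lam}=-\alpha^{-1}\ln(\delta'(1-U))-\alpha^{-1}\ln(\lam/\alpha)$ gives precisely the expression in the statement, and its optimality for \eqref{static1} is inherited from \citethm{mainthm}.

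The computation is short, so the points that deserve care are mostly bookkeeping: keeping the distinction between $\ln\lam$ and $\ln(\lam/\alpha)$ straight, and making sure $\ln(\delta'(\cdot))$ is meaningful. For the latter, observe that by \citethm{mainthm} the function $t\mapsto(u')^{-1}(\lam\delta'(t))$ is a nonnegative RCI function, which forces $0<\delta'(t)\le\alpha/\lam$ for a.e.\ $t\in(0,1)$; together with the finiteness of $\int_0^1(u')^{-1}(\lam\delta'(t))Q_{\rho}(1-t)\dt=x$ noted above, the integral defining the constant is well defined and finite. I do not expect any genuinely hard step here; the content of the corollary is exactly that, for exponential utility, the Lagrange multiplier can be eliminated in closed form.
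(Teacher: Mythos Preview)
Your proposal is correct and follows essentially the same route as the paper: both compute $(u')^{-1}(y)=\alpha^{-1}\ln\alpha-\alpha^{-1}\ln y$, insert this into the budget identity from \citethm{mainthm} to solve for the additive constant $-\alpha^{-1}\ln(\lam/\alpha)$, and then substitute back into the optimal profile to eliminate $\lam$. The paper phrases the reduction via \citeprop{quantile} (showing $\barq$ has the claimed form), but the computation is identical to yours.
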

\begin{proof}
By \citeprop{quantile},
it is sufficient to prove that the optimal solution $\barq$ to \eqref{opi3b} given in \citethm{mainthm} satisfies
\begin{align*}
\barq(t)=\frac{1}{\BE{\rho}}\left(x+\alpha^{-1}\int_0^1 \ln(\delta'(s))Q_{\rho}(1-s)\ds\right)-\alpha^{-1} \ln(\delta'(t)).
\end{align*}	
Recalling that $u(x)=-e^{-\alpha x}$, $u'(x)=\alpha e^{-\alpha x}$, we get
\[(u')^{-1}(x)=\alpha^{-1}\ln(\alpha)-\alpha^{-1}\ln(x),\quad x>0;\]
so
\begin{align*}
x&=\int_0^1 (u')^{-1}\left(\lam \delta'(s)\right)Q_{\rho}(1-s)\ds\\
&=\int_0^1 (\alpha^{-1}\ln(\alpha)-\alpha^{-1}\ln(\lam )-\alpha^{-1} \ln(\delta'(s)))Q_{\rho}(1-s)\ds\\
&=(\alpha^{-1}\ln(\alpha)-\alpha^{-1}\ln(\lam ))\BE{\rho}-\alpha^{-1}\int_0^1 \ln(\delta'(s))Q_{\rho}(1-s)\ds.
\end{align*}
Hence
\begin{align*}
\alpha^{-1}\ln(\alpha)-\alpha^{-1}\ln(\lam )=\frac{1}{\BE{\rho}}\left(x+\alpha^{-1}\int_0^1 \ln(\delta'(s))Q_{\rho}(1-s)\ds\right).
\end{align*}
Consequently,
\begin{align*}
\barq(t)=(u')^{-1}\left(\lam \delta'(t)\right)&=\alpha^{-1}\ln(\alpha)-\alpha^{-1}\ln(\lam )-\alpha^{-1} \ln(\delta'(t))\\
&=\frac{1}{\BE{\rho}}\left(x+\alpha^{-1}\int_0^1 \ln(\delta'(s))Q_{\rho}(1-s)\ds\right)-\alpha^{-1} \ln(\delta'(t)),
\end{align*}	
as desired.
\qed
\end{proof}

Now let us consider the optimal portfolio to the problem \eqref{eut+}.
To get more explicit expression, we assume all the market parameters are deterministic. In this case, $\rho$ is log-normally distributed (whose mean and variance can be given explicitly in terms of $r$ and $\theta$); thus $F_{\rho}$ is a continuous function and $\setrho=\{F_{\rho}(\rho)\}$.
In Theorem \ref{Optimalportfolio}, we can only take $U=F_{\rho}(\rho)$ and $X^{*}$ in Corollary \ref{specialcase} becomes
$X^{*}=\phi(\rho)$, where
\[\phi(y)=\frac{1}{\BE{\rho}}\left(x+\alpha^{-1}\int_0^1 \ln(\delta'(s))Q_{\rho}(1-s)\ds\right)-\alpha^{-1} \ln(\delta'(1-F_{\rho}(y))).\]

Now assume the following partial differential equation admits a $C^{1,2}$-smooth solution $\varphi$\footnote{This is guaranteed if both $\theta$ and $r$ are smooth functions of $t$ and $\theta(t)\neq 0$ for all $t\in[0,T]$. }:
\begin{align*}
\begin{cases}
\displaystyle\frac{\partial \varphi}{\partial t}+\frac{1}{2}|\theta(t)|^{2}y^{2}\frac{\partial^{2}\varphi}{\partial y^{2}}-
r(t)y\frac{\partial\varphi}{\partial y}=0, &\quad (t,y)\in[0,T) \times(0,\infty);\medskip\\
\displaystyle\varphi\big|_{(T,y)}=y\phi(y), &\quad y\in(0,\infty).
\end{cases}
\end{align*}
Let $\varrho$ be defined by \eqref{varrho}.
Applying It\^{o}'s lemma to $\varphi(t,\varrho(t))$, we obtain
\begin{align*}
\BE{\rho X^{*}\;\Big|\;\BF_t}
=\BE{\varphi(T,\varrho(T))\;\Big|\;\BF_t}=\varphi(t,\varrho(t))
=\varphi(0, \varrho(0))-\int_{0}^{t}\frac{\partial \varphi}{\partial y}(s,\varrho(s))\varrho(s)\theta(s)^{\tr}\dd W_{t}.
\end{align*}
Therefore, the process $Z$ in Theorem \ref{Optimalportfolio} is given by
\begin{align*}
Z(t)=-\frac{\partial \varphi}{\partial y}(t,\varrho(t))\varrho(t)\theta(t),\quad t\in[0,T].
\end{align*}
Substituting it into \eqref{optimalportfoliopi} gives an explicit feedback optimal portfolio to the problem \eqref{eut+}:
\begin{align*}
\pi^{*}(t)
&=\big(\sigma(t)^{\tr}\big)^{-1}\theta(t)\Big(X^{\pi^{*}}(t)-\frac{\partial \varphi}{\partial y}(t,\varrho(t))\Big),\quad t\in[0,T].
\end{align*}

\section{Concluding Remarks}\label{conclude}

This paper formulates and solves a robust EUM problem with an intractable claim by employing quantile formulation and calculus of variations, with the solution derived in terms of certain variational inequalities of ODEs. The result can be also used to {\it price}
the intractable claim $\pr$. Because $\pr$ is unpredictable and unhedgeable, we are unable to determine its (either supper or lower) hedging price the usual way. However, the results of this paper allow us to determine its price according to the famous utility indifference pricing principle; see e.g. \cite{CSW01,D97,MZ04a,MZ04b,OZ03,PTW09,RE00}.

Specifically, suppose the investor, with an initial endowment $x$, does not possess the intractable claim initially but is considering buying it. Then her EUM problem before purchasing $\pr$ is given by \eqref{eut}, with the optimal value $\veum(x)$. If she pays $p$ dollars to buy the claim, then her EUM problem after the purchase becomes \eqref{eut+} with $x$ replaced by $x-p$, leading to the optimal value $V_0(x-p)$. The utility indifference pricing theory stipulates that $p$ should be such that the investor is ``indifferent'' between buying and not buying $\pr$, namely, $p$ satisfies
\begin{align}\label{pricing}
\veum(x)=V_0(x-p).
\end{align}
Notice
\[\lim_{p\to-\infty}V_0(x-p)=u(\infty)>\veum(x),\]
and
\[V_0(0)=\BE{u(\pr)}.\]
It follows from \citelem{unboundedV} that the function $V_0$ is continuous and strictly increasing; so \eqref{pricing} admit a unique solution $p\leq x$ if and only if $\veum(x)\geq \BE{u(\pr)}$.
Note the price is negative if the claim is negative.

Beyond the utility indifference pricing,
our result can be possibly applied to price the intractable claim under other pricing principle; for instance, the marginal utility-based pricing in \cite{HKS05}.
We leave the details to interested readers.

\end{document}